\tikzstyle{tikzfig}=[baseline=-0.25em,scale=0.5]
\tikzstyle{every loop}=[]
\newcommand{\inlinetikz}[3]{\raisebox{#1}{\scalebox{#2}{#3}}}
\tikzset{%
    -,
    auto,
    thick,
    every node/.style={line width=1pt },
    main node/.style={draw,circle,fill=white},
    gray node/.style={draw,circle,fill=gray},
    main edge/.style={line width=2pt},
    orange edge/.style={draw=orange, line width=3pt},
    secondary edge/.style={draw=gray, line width=1pt},
}
\def\extriangle{
    \begin{tikzpicture}
        \begin{pgfonlayer}{nodelayer}
            \node [style=main node] (4) at (0, 0) {};
            \node [style=main node] (5) at (0, 1) {};
            \node [style=gray node] (6) at (0.5, 0.5) {};
        \end{pgfonlayer}
        \begin{pgfonlayer}{edgelayer}
            \draw [style=main edge] (4) to (5);
            \draw [style=main edge] (5) to (6);
            \draw [style=main edge] (6) to (4);
        \end{pgfonlayer}
    \end{tikzpicture}
}
\def\exsquare{
    \begin{tikzpicture}
        \begin{pgfonlayer}{nodelayer}
            \node [style=gray node] (0) at (0.5, 0.5) {};
            \node [style=main node] (1) at (1, 0) {};
            \node [style=main node] (2) at (1.5, 0.5) {};
            \node [style=main node] (3) at (1, 1) {};
        \end{pgfonlayer}
        \begin{pgfonlayer}{edgelayer}
            \draw [style=main edge] (0) to (1);
            \draw [style=main edge] (1) to (2);
            \draw [style=main edge] (2) to (3);
            \draw [style=main edge] (3) to (0);
        \end{pgfonlayer}
    \end{tikzpicture}
}
\def\exfish{
    \begin{tikzpicture}
        \begin{pgfonlayer}{nodelayer}
            \node [style=gray node] (0) at (0.5, 0.5) {};
            \node [style=main node] (1) at (1, 0) {};
            \node [style=main node] (2) at (1.5, 0.5) {};
            \node [style=main node] (3) at (1, 1) {};
            \node [style=main node] (4) at (0, 0) {};
            \node [style=main node] (5) at (0, 1) {};
            \node [style=gray node] (6) at (0.5, 0.5) {};
        \end{pgfonlayer}
        \begin{pgfonlayer}{edgelayer}
            \draw [style=main edge] (0) to (1);
            \draw [style=main edge] (1) to (2);
            \draw [style=main edge] (2) to (3);
            \draw [style=main edge] (3) to (0);
            \draw [style=main edge] (4) to (5);
            \draw [style=main edge] (5) to (6);
            \draw [style=main edge] (6) to (4);
        \end{pgfonlayer}
    \end{tikzpicture}
}
\def\exextension{
    \begin{tikzpicture}
        \begin{pgfonlayer}{nodelayer}
            \node [style=main node] (0) at (0, 0) {};
            \node [style=main node] (1) at (1, 0) {};
            \node [style=main node] (2) at (2.5, 0) {};
            \node [style=main node] (3) at (3.5, 0) {};
            \node [style=main node] (4) at (4.5, 0) {};
            \node [style=main node] (5) at (6, -1) {};
            \node [style=main node] (6) at (7, -1) {};
            \node [style=main node] (7) at (6.5, 0) {};
            \node [style=main node] (8) at (6.5, -2) {};
            \node [style=main node] (9) at (4.5, -2) {};
            \node [style=main node] (10) at (3.5, -2) {};
            \node [style=main node] (11) at (2.5, -2) {};
            \node [style=main node] (12) at (1, -2) {};
            \node [style=main node] (13) at (0, -2) {};
        \end{pgfonlayer}
        \begin{pgfonlayer}{edgelayer}
            \draw [style=main edge] (0) to (1);
            \draw [style=main edge] (2) to (3);
            \draw [style=main edge] (3) to (4);
            \draw [style=main edge] (6) to (7);
            \draw [style=main edge] (7) to (5);
            \draw [style=orange edge] (5) to (6);
            \draw [style=main edge] (13) to (12);
            \draw [style=main edge] (11) to (10);
            \draw [style=main edge] (10) to (9);
            \draw [style=main edge] (8) to (5);
            \draw [style=main edge] (8) to (6);
            \draw [style=secondary edge] (9) to (7);
            \draw [style=secondary edge] (10) to (7);
            \draw [style=secondary edge] (11) to (7);
            \draw [style=secondary edge] (12) to (7);
            \draw [style=secondary edge] (13) to (7);
            \draw [style=secondary edge] (9) to (4);
            \draw [style=secondary edge] (9) to (3);
            \draw [style=secondary edge] (9) to (2);
            \draw [style=secondary edge] (9) to (1);
            \draw [style=secondary edge] (9) to (0);
            \draw [style=secondary edge] (10) to (4);
            \draw [style=secondary edge] (10) to (3);
            \draw [style=secondary edge] (10) to (2);
            \draw [style=secondary edge] (10) to (1);
            \draw [style=secondary edge] (10) to (0);
            \draw [style=secondary edge] (11) to (4);
            \draw [style=secondary edge] (11) to (3);
            \draw [style=secondary edge] (11) to (2);
            \draw [style=secondary edge] (11) to (1);
            \draw [style=secondary edge] (11) to (0);
            \draw [style=secondary edge] (12) to (4);
            \draw [style=secondary edge] (12) to (3);
            \draw [style=secondary edge] (12) to (2);
            \draw [style=secondary edge] (12) to (1);
            \draw [style=secondary edge] (12) to (0);
            \draw [style=secondary edge] (13) to (4);
            \draw [style=secondary edge] (13) to (3);
            \draw [style=secondary edge] (13) to (2);
            \draw [style=secondary edge] (13) to (1);
            \draw [style=secondary edge] (13) to (0);
            \draw [style=secondary edge] (4) to (8);
            \draw [style=secondary edge] (3) to (8);
            \draw [style=secondary edge] (2) to (8);
            \draw [style=secondary edge] (1) to (8);
            \draw [style=secondary edge] (0) to (8);
            \draw [style=secondary edge, bend left=90, looseness=2.00] (7) to (8);
        \end{pgfonlayer}
    \end{tikzpicture}
}
\def\exgadgetone{
    \begin{tikzpicture}
        \begin{pgfonlayer}{nodelayer}
            \node [style=main node] (0) at (0, 0) {};
            \node [style=main node] (1) at (1, 0) {};
            \node [style=gray node] (2) at (2, -0.5) {};
            \node [style=main node] (3) at (1, -1) {};
            \node [style=main node] (4) at (0, -1) {};
        \end{pgfonlayer}
        \begin{pgfonlayer}{edgelayer}
            \draw [style=main edge] (1) to (2);
            \draw [style=main edge] (2) to (3);
            \draw [style=secondary edge] (0) to (4);
            \draw [style=secondary edge] (0) to (3);
            \draw [style=secondary edge] (1) to (3);
            \draw [style=secondary edge] (1) to (4);
        \end{pgfonlayer}
    \end{tikzpicture}
}
\def\exgadgettwo{
    \begin{tikzpicture}
        \begin{pgfonlayer}{nodelayer}
            \node [style=main node] (1) at (1, 0) {};
            \node [style=main node] (2) at (2, 0) {};
            \node [style=gray node] (3) at (0, -0.5) {};
            \node [style=main node] (4) at (1, -1) {};
            \node [style=main node] (5) at (2, -1) {};
        \end{pgfonlayer}
        \begin{pgfonlayer}{edgelayer}
            \draw [style=main edge] (2) to (1);
            \draw [style=main edge] (5) to (4);
            \draw [style=secondary edge] (1) to (4);
            \draw [style=secondary edge] (1) to (5);
            \draw [style=secondary edge] (4) to (2);
            \draw [style=secondary edge] (2) to (5);
        \end{pgfonlayer}
    \end{tikzpicture}
}
\def\exgadgetthree{
    \begin{tikzpicture}
        \begin{pgfonlayer}{nodelayer}
            \node [style=main node] (2) at (2, 0) {};
            \node [style=main node] (5) at (2, -1) {};
            \node [style=gray node] (6) at (1, 0) {};
            \node [style=main node] (7) at (0, 0) {};
            \node [style=main node] (8) at (0, -1) {};
        \end{pgfonlayer}
        \begin{pgfonlayer}{edgelayer}
            \draw [style=main edge] (6) to (2);
            \draw [style=main edge] (6) to (5);
            \draw [style=secondary edge] (2) to (5);
            \draw [style=secondary edge] (8) to (2);
            \draw [style=secondary edge] (7) to (8);
            \draw [style=secondary edge] (7) to (5);
        \end{pgfonlayer}
    \end{tikzpicture}
}
\def\exgraphone{
    \begin{tikzpicture}
        \begin{pgfonlayer}{nodelayer}
            \node [style=main node] (0) at (0, 0) {};
            \node [style=main node] (1) at (1, 0) {};
            \node [style=main node] (2) at (2, 0) {};
        \end{pgfonlayer}
        \begin{pgfonlayer}{edgelayer}
            \draw [style=main edge] (2) to (1);
        \end{pgfonlayer}
    \end{tikzpicture}
}
\def\exgraphtwo{
    \begin{tikzpicture}
        \begin{pgfonlayer}{nodelayer}
            \node [style=main node] (0) at (0, 0) {};
            \node [style=main node] (1) at (1, 0) {};
        \end{pgfonlayer}
        \begin{pgfonlayer}{edgelayer}
            \draw [style=main edge] (0) to (1);
        \end{pgfonlayer}
    \end{tikzpicture}
}
\def\opt{{\it opt}}
\def\F{{\cal F}}
\def\G{{\cal G}}
\def\N{{\bf N}}
\def\isub{\trianglelefteq} 
\def\phisub{\isub_\phi}
\let\phi=\varphi
\def\join{\nabla}
\def\iso{\simeq}
\newcommand{\set}[2]{\ensuremath{\left\{\, #1 \mid #2 \,\right\}}}
\newcommand{\NDD}[1]{\textsc{Delayed $#1$-Node-Deletion Problem}}
\def\FNDD{\NDD{\F}}
\def\HNDD{\NDD{H}}
\newcommand{\EDD}[1]{\textsc{Delayed $#1$-Edge-Deletion Problem}}
\def\FEDD{\EDD{\F}}
\def\HEDD{\EDD{H}}
\theoremstyle{definition} 
\newtheorem{definition}{Definition}
\theoremstyle{plain} 
\newtheorem{theorem}{Theorem}
\newtheorem{lemma}{Lemma}
\newtheorem{corollary}{Corollary}
\begin{document}
\title{Advice Complexity bounds for Online Delayed $\F$-Node-, $H$-Node- and $H$-Edge-Deletion Problems}
\author{
Niklas Berndt\and
Henri Lotze}
\date{RWTH Aachen University, Germany}
%
\maketitle
\begin{abstract}
    Let $\F$ be a fixed finite obstruction set of graphs and
    $G$ be a graph revealed in an online fashion, node by node.  The online {\sc Delayed
    $\F$-Node-Deletion Problem} ({\sc
    $\F$-Edge-Deletion Problem}) is to keep $G$ free of every $H \in \F$
    by deleting nodes (edges) until no induced subgraph isomorphic to any graph
    in~$\F$ can be found in $G$. The task is to keep the number of
    deletions minimal.
    
    Advice complexity is a model in which an online algorithm has
    access to a binary tape of infinite length, on which an oracle can
    encode information to increase the performance of the algorithm.
    We are interested in the minimum number of advice bits that
    are necessary and sufficient to solve a deletion problem optimally.
    
    In this work, we first give essentially tight bounds on the advice
    complexity of the {\sc Delayed
    $\F$-Node-Deletion Problem} and {\sc
    $\F$-Edge-Deletion Problem} where $\F$ consists of a single, arbitrary graph $H$.
    We then show that the gadget used to prove
    these results can be utilized to give tight bounds in the case of node deletions if $\F$ consists of
    either only disconnected graphs or only connected graphs. Finally, we show that the number
    of advice bits that is necessary and sufficient to solve the general {\sc Delayed $\F$-Node-Deletion Problem} is heavily dependent
    on the obstruction set~$\F$. To this end, we provide sets for which
    this number is either constant, logarithmic or linear in the
    optimal number of deletions.
\end{abstract}

\section{Introduction}
The analysis of online problems is concerned with studying the worst case
performance of algorithms where the instance is revealed element by element and decisions 
have to be made immediately and irrevocably. To measure the performance of such
algorithms, their solution is compared to the optimal solution of the same
instance. The largest ratio between the size of an online algorithms solution
and the optimal solution size over all instances is then called the (strict) \emph{competitive
ratio} of an algorithm. Finding an algorithm with the smallest possible
competitive ratio is the common aim of online analysis. The study of online
algorithms was started by Sleator and Tarjan~\cite{SleatorT84} and has been
active ever since. For a more thorough introduction on competitive analysis, we refer
the reader to the standard book by Borodin and El-Yaniv~\cite{Borodin1998}.

The online problems studied in this work are each defined over a fixed family
$\F$ of graphs. An induced online graph $G$ is revealed iteratively by revealing its
nodes. The solution set, i.e.\ a set of nodes (edges), of an algorithm
is called $S$ and we define $G - S$ as $V(G) \setminus V(S)$ or as $E(G) \setminus E(S)$
respectively, depending on whether $S$ is a set of nodes or edges. When in some
step $i$ an induced subgraph of $G[\{v_1,\ldots,v_i\}] - S$ is isomorphic to a
graph $H \in \F$, an algorithm is forced to delete nodes (edges) $T$ by adding
them to $S$ until no induced graph isomorphic to some $H \in \F$ can be found in
$G[\{v_1,\ldots,v_i\}] - \{S \cup T\}$. The competitive ratio of an algorithm is then
measured by taking the ratio of its solution set size to the solution set size
of an optimal offline algorithm.

Note that this problem definition is not compatible with the classical online
model, as nodes (edges) do not immediately have to be added to $S$ or ultimately
not be added to $S$. Specifically, elements that are not yet part of $S$ may be
added to $S$ at a later point, but no elements may be removed from $S$ at any
point. Furthermore, an algorithm is only forced to add elements to $S$ whenever
an $H \in \F$ is isomorphic to some induced subgraph of the current online
graph. Chen et al.~\cite{ChenHLR21} showed that no algorithm for this problem can admit
a constantly bounded competitive ratio in the classical online setting and that there are
families $\F$ for which the competitive ratio is strict in the size of the
largest forbidden graph $H \in \F$.
This model, where only an incremental valid partial solution is to be upheld,
was first studied by Boyar et al.~\cite{Boyar16} and coined ``Late Accept''
by Boyar et al.~\cite{Boyar17} in the following year.
As we study the same problems as Chen et al., we use the term
\emph{delayed} for consistency.

When studying the competitive ratio of online algorithms, Dobrev et
al.~\cite{DobrevKP09} asked the question which, and crucially \emph{how much}
information an online algorithm is missing in order to improve its competitive
ratio. This model was revised by Hromkovi\v{c} et al.~\cite{HromkovicKK10},
further refined by B\"{o}ckenhauer et al.~\cite{BockenhauerKKKM17} and is known
as the study of \emph{advice complexity} of an online problem. In this setting,
an online algorithm is given access to a binary advice tape that is infinite in
one direction and initialized with random bits. An oracle may then overwrite a
number of these random bits, starting from the initial position of the tape in
order to encode information about the upcoming instance or to simply give
instructions to an algorithm. An algorithm may then read from this tape and act
on this information during its run. The maximum number of bits an algorithm
reads from the tape over all instances to obtain a target competitive ratio of $c$
is then called the \emph{advice complexity} of
an algorithm. For a more thorough introduction to the analysis of advice
complexity and problems studied under this model, we refer the reader
to the book by Komm~\cite{Komm16} and the
survey paper by Boyar et al.~\cite{BoyarFKLM17}. In this
work, we are interested in the minimum needed and maximum necessary number of
bits of information an online algorithm needs in order to solve the discussed
problems optimally.

The analysis of advice complexity assumes the existence of an
almighty oracle that can give perfect advice, which is not realistic.
However, recent publications utilize such bounds,
especially the information theoretic lower bounds. One field is that
of machine-learned advice~\cite{AhmadianEMP22,LykourisV21,LindermayrMS22}. A related field is that of
uncertain or untrusted advice~\cite{LindermayrM22}, which analyses the
performance of algorithms depending on
how accurate some externally provided information for an algorithm is.


Chen et al.~\cite{ChenHLR21} gave bounds on the \FNDD{} and \FEDD{}
for several restricted families $\F$, dividing the problem by
restricting whether $\F$ may contain only connected graphs and whether
$\F$ consists of a single graph $H$ or arbitrarily many. Tables
\ref{tab:node} and~\ref{tab:edge} show our contributions: Up to some
minor gap due to encoding details, we close the remaining gap for the
\HNDD{} that was left open by Chen et al.~\cite{ChenHLR21}, give tight
bounds for the \HEDD, and provide a variety of (tight) bounds for the
\FNDD{} depending on the nature of $\F$.

{\renewcommand{\arraystretch}{1.1}%
\begin{table}[t]
    \caption{Advice complexity of node-deletion problems.}
    \label{tab:node}
    \centering
    \begin{tabular}{l @{\hskip 2.5mm}|@{\hskip 2.5mm} l @{\hskip 5mm}l}
        Node-Deletion & Single graph $H$ forbidden & Family $\F$ of graphs forbidden\\
        \hline
        All graphs connected & Chen et al.~\cite{ChenHLR21}& Chen et al.~\cite{ChenHLR21}\\

        Arbitrary graphs & Essentially tight bound & Lower \& Upper bounds\\
                         & {\hfill \scriptsize Thm.~\ref{thm:trivial_ub_node_deletion}, Cor.~\ref{cor:hndd}} & {\hfill \scriptsize Thm.~\ref{thm:trivial_ub_node_deletion}, Thm.~\ref{thm:nodelogkub}, Thm.~\ref{thm:nodelogklbuni}, Thm.~\ref{thm:nodelogklbisol}}
    \end{tabular}
\end{table}
}
{\renewcommand{\arraystretch}{1.1}%
\begin{table}[t]
    \caption{Advice complexity of edge-deletion problems.}
    \label{tab:edge}
    \centering
    \begin{tabular}{l @{\hskip 2.5mm}|@{\hskip 2.5mm} l @{\hskip 5mm} l}
        Edge-Deletion & Single graph $H$ forbidden & Family $\F$ of graphs forbidden\\
        \hline
        All graphs connected & Chen et al.~\cite{ChenHLR21} & Chen et al.~\cite{ChenHLR21} \\

        Arbitrary graphs & Essentially tight bound & Open\\
                         & {\hfill \scriptsize Thm.~\ref{thm:trivial_ub_edge_deletion}, Cor.~\ref{cor:heddnoisol}, Thm.~\ref{thm:heddwithisol}}
    \end{tabular}
\end{table}
}

The problem for families of \emph{connected} graphs has one nice characteristic
that one can exploit when designing algorithms solving node- or
edge-deletion problems over such families. Intuitively, one can build
an adversarial instance by presenting some copy of an $H \in \F$,
extend this $H$ depending on the behavior of an algorithm to force
some specific deletion and continue by presenting the next copy of
$H$. The analysis can then focus on each individual copy of $H$ to
determine the advice complexity. For families of \emph{disconnected}
graphs, this is not always so simple: Remnants of a previous copy of
some $H$ together with parts of another copy of $H$ may themselves be a
copy of some other $H' \in \F$. Thus, while constructing gadgets for
some singular copy of $H \in \F$ is usually simple and can force an
algorithm to distinguish the whole edge set of $H$, this generally
breaks down once the next copy of $H$ is presented in the same
instance.

The rest of this work is structured as follows. We first give formal definitions
of the problems that we analyze in this work and introduce necessary notation. We
then give tight
bounds on the \FNDD{} for completely connected and completely disconnected
families $\F$ and analyze the exact advice complexity of the \HEDD{} and \HNDD{}.
We then take a closer look at the general \FNDD{}, showing that its
advice complexity is heavily dependent on the concrete obstruction set $\F$.
To this end, we show that depending on $\F$, constant, logarithmic or linear advice
can be necessary and sufficient to optimally solve the \FNDD.

\subsection{Notation and Problem Definitions}
We use standard graph notation in this work. Given an undirected graph $G =
(V,E)$, $G[V']$ with $V'\subseteq V$ denotes the graph induced by the node set
$V'$. We use $|G|$ to denote $|V(G)|$ and $||G||$ to denote $|E(G)|$. We use the
notation $\overline{G}$ for the complement graph of $G$. $K_n$ and
$\overline{K_n}$ denote the $n$-clique and the $n$-independent set respectively. $P_n$ denotes the path on $n$ nodes.
The neighborhood of a vertex $v$ in a graph $G$ consists of all vertices adjacent
to $v$ and is denoted by $N^G(v)$. A vertex $v$ is called universal, if $N^G(v) = V \setminus \{v\}$.
We measure the advice complexity in $opt$, which denotes the size of the optimal
solution of a given problem.

We adapt some of the notation of Chen et al.~\cite{ChenHLR21}. We use
$H \phisub G$ for graphs $H$ and $G$ iff there exists an
isomorphism~$\phi$ such that $\phi(H)$ is an induced subgraph of $G$.
A graph $G$ is called \emph{$\F$-free} if there is no $H \phisub G$ for
any $H \in \F$. Furthermore, a \emph{gluing} operation works as
follows: Given two graphs $G$ and $G'$, identify a single node from
$G$ and a single node from $G'$. For example, if we glue
\inlinetikz{-3pt}{0.3}{\extriangle}
together with \inlinetikz{-3pt}{0.3}{\exsquare} at the gray nodes, the resulting graph is
\inlinetikz{-3pt}{0.3}{\exfish}.

\begin{definition}
    Let $\F$ be a fixed family of graphs. Given an online graph $G$ induced by
    its nodes $V(G) = \{v_1,\ldots,v_n\}$, ordered by their occurrence
    in an online instance. The $\FNDD$ is for every $i$ to select a set $S_i
    \subseteq \{v_1,\ldots,v_i\}$ such that $G[\{v_1,\ldots,v_i\}] - S_i$ is $\F$-free.
    Furthermore, it has to hold that $S_1 \subseteq\ldots\subseteq S_n$,
    where $|S_n|$ is to be minimized.
\end{definition}
The definition of the $\FEDD$ is identical, with $S$ being a set of edges of
$G$ instead of nodes. If $\F = \{H\}$ we speak of an {\sc H-Deletion Problem} instead of an {\sc\{H\}-Deletion Problem}.

For an obstruction set $\F$ we assume that there exist no distinct
$H_1, H_2 \in \F $ with $H_1 \phisub H_2$ for some isomorphism $\phi$,
as each online graph containing $H_2$ also contains $H_1$, making
$H_2$ redundant.  Furthermore, we assume in the case of the $\FEDD$
that $\F$ contains no $\overline{K_n}$ for any $n$. This assumption is
arguably reasonable as no algorithm that can only delete edges is able
to remove a set of isolated nodes from a graph.

\section{Essentially Tight Advice Bounds}
One can easily construct a naive online algorithm for the \FNDD{} that is provided a complete optimal solution on the advice tape and deletes nodes accordingly. The online algorithm does not make any decisions itself, but strictly follows the solution provided by the oracle. The resulting trivial upper bounds on the advice complexity of the problem is summarized in the following theorem.
\begin{theorem}\label{thm:trivial_ub_node_deletion}
    Let $\F$ be an arbitrary family of graphs, and let $H \in \F$ be a maximum order graph. Then there is an optimal online algorithm for the \FNDD{} with advice that reads at most $\opt \cdot \log \vert H \vert + O(\log\opt)$ bits of advice.
\end{theorem}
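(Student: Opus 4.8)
The plan is to describe an online algorithm that is handed a description of *some* fixed optimal offline solution $S^\ast$ on the advice tape and blindly deletes exactly the nodes of $S^\ast$ as they are revealed. Since $|S^\ast| = \opt$ and deleting all of $S^\ast$ renders the whole graph $\F$-free, deleting the prefix $S^\ast \cap \{v_1,\dots,v_i\}$ at step $i$ keeps $G[\{v_1,\dots,v_i\}] - S_i$ a subgraph of the $\F$-free graph $G - S^\ast$, hence $\F$-free; moreover the resulting sets are nested, so the algorithm is feasible and optimal by construction. The only content is the bookkeeping: counting how many advice bits are needed to communicate $S^\ast$ to the algorithm.

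First I would have the oracle write $\opt$ itself using a self-delimiting encoding (e.g.\ Elias-style), costing $O(\log \opt)$ bits, so that the algorithm knows in advance exactly how many node-indices to expect. The naive way to then transmit $S^\ast$ would be to write, for each newly arrived node, a single bit saying whether it lies in $S^\ast$; but that costs one bit per node and the number of nodes is unbounded in terms of $\opt$, so this does not give the claimed bound. Instead I would exploit the structure of the problem: at any step where a forbidden $H' \in \F$ first appears as an induced subgraph, the algorithm is forced to delete at least one node, and an optimal solution can be assumed to delete nodes only at such forced moments and only from within (a copy of) the just-completed obstruction. Concretely, whenever the arrival of $v_i$ creates an induced copy of some $H' \in \F$ in the current remaining graph, the algorithm identifies such a copy deterministically (say, lexicographically least among the vertex sets of induced copies), and the oracle need only specify *which* of its $\le |H'| \le |H|$ vertices to delete — one index from a set of size at most $|H|$, i.e.\ $\log |H|$ bits. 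Since each such forced event is paid for by at least one deletion charged to $\opt$, and each deletion is specified by at most $\log |H|$ bits, the total is at most $\opt \cdot \log |H|$ bits for the deletion indices plus the $O(\log \opt)$ bits for $\opt$ itself.

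The step I expect to require the most care is the claim that there is an optimal solution that deletes nodes only ``on demand'' and only from within the freshly-completed obstruction — i.e.\ that restricting attention to such canonical optimal solutions does not increase $\opt$. This needs a small exchange argument: given any optimal $S^\ast$, process the online instance and, whenever a forbidden subgraph is completed, note that $S^\ast$ must contain at least one vertex of every induced copy of every $H' \in \F$ present (else $G - S^\ast$ would not be $\F$-free), in particular of the canonically-chosen copy; schedule that vertex's deletion at this step. Doing this for all forced events deletes a subset of $S^\ast$ in a delayed, nested fashion and still yields an $\F$-free final graph, so it is itself an optimal delayed solution of the required canonical form. A minor additional subtlety is that a single new node $v_i$ may complete several obstructions at once; but one then simply iterates the ``pick canonical copy, read $\log|H|$ bits, delete'' loop until the current graph is again $\F$-free, and each iteration still corresponds to a distinct charged deletion, so the accounting is unaffected. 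Finally, to make the advice length a well-defined function read off the tape, the algorithm reads the $O(\log\opt)$-bit header for $\opt$, then reads exactly $\lceil \log |H|\rceil$ bits per forced deletion until it has performed $\opt$ deletions, which bounds the total bits read by $\opt \cdot \log|H| + O(\log \opt)$ as claimed. $\qed$
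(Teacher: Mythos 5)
Your proposal is correct and follows essentially the same route as the paper's (much terser) proof: self-delimiting encoding of $\opt$, then one index into the currently detected forbidden copy per forced deletion; the exchange argument you supply for why some optimal solution deletes only on demand from the detected copy is exactly the point the paper leaves implicit. One small accounting slip: reading $\lceil \log \vert H \vert \rceil$ bits \emph{per} deletion costs $\opt \cdot \lceil \log \vert H \vert \rceil$, which can exceed the claimed $\opt \cdot \log \vert H \vert + O(\log\opt)$ by up to $\opt$ bits when $\vert H \vert$ is not a power of two; the paper avoids this by encoding all $\opt$ indices as a single base-$\vert H \vert$ number read as one block of $\lceil \opt \cdot \log \vert H \vert \rceil$ bits, a detail that matters here because the paper's notion of ``essentially tight'' tolerates only an additive $O(\log\opt)$ gap.
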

\begin{proof}
    The online algorithm reads $opt$ from the tape using self-delimiting encoding (see \cite{Komm16}). Then it reads $\lceil \opt \cdot \log \vert H \vert \rceil$ bits and interprets them as $\opt$ numbers $d_1,...,d_\opt \leq \vert H \vert$. Whenever some forbidden induced graph is detected, the algorithm deletes the $d_i$th node.
    
\end{proof}

With the same idea we can easily get a similar upper bound for the \HEDD{}.
\begin{theorem}\label{thm:trivial_ub_edge_deletion}
    There is an optimal online algorithm with advice for the \HEDD{} that reads at most $\opt \cdot \log \| H \| + O(\log \opt)$ bits of advice.
\end{theorem}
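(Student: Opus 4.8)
The plan is to imitate the proof of Theorem~\ref{thm:trivial_ub_node_deletion} almost verbatim, replacing ``node of $H$'' by ``edge of $H$'' throughout, so that the oracle simply hands the algorithm an optimal edge-deletion solution. The only real work lies in pinning down a deterministic protocol so that oracle and algorithm agree on \emph{which} edge each advice number refers to. Fix once and for all an enumeration of the edges of $H$; then, for any induced copy $C$ of $H$ in the current online graph, order the vertices of $C$ by their index in the revealed sequence $v_1,\dots,v_i$ and order the edges of $C$ lexicographically by these indices. This yields a bijection between $\{1,\dots,\|H\|\}$ and $E(C)$ that both the oracle and the algorithm compute identically. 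Likewise fix a canonical order (e.g.\ lexicographic by vertex-index set) in which the algorithm processes the possibly several induced copies of $H$ that become present after a new vertex is revealed.

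First I would have the oracle compute an optimal solution $S_{\opt} = S_n$ offline and then simulate the deterministic behaviour of the algorithm on the known instance. Whenever the simulated algorithm is forced to act on a detected copy $C$ of $H$, the oracle selects an edge of $C$ that still lies in $S_{\opt}$ minus the edges deleted so far and records its canonical index in $\{1,\dots,\|H\|\}$. Such an edge always exists: at every step $i$, the graph $G[\{v_1,\dots,v_i\}] - S_{\opt}$ is an induced subgraph of $G - S_{\opt}$ and hence $H$-free, so $S_{\opt}$ contains an edge of every induced copy of $H$ in the current graph; and if $C$ survives in the algorithm's current graph, none of its edges have been deleted yet, so $S_{\opt}$ still contains one of them. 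The oracle writes $\opt$ onto the tape using a self-delimiting encoding (costing $O(\log\opt)$ bits) followed by the sequence of recorded indices, for a total of $\lceil \opt\cdot\log\|H\|\rceil$ further bits.

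The algorithm then reads $\opt$, reads the $\lceil \opt\cdot\log\|H\|\rceil$ bits as numbers $d_1,\dots,d_{\opt}\le\|H\|$, and keeps a pointer into this list: each time it is forced to act on a detected copy $C$, handled in the fixed canonical order, it deletes the edge of $C$ with canonical index equal to the next $d_j$ and advances the pointer. Correctness follows from the invariant that the set of edges deleted so far is always a subset of $S_{\opt}$: since $S_{\opt}$ restricted to the current graph hits every induced copy of $H$, and the edge the oracle recorded for $C$ is exactly such a hitting edge of $C$, the algorithm can always continue deleting until the current graph is $H$-free. Moreover distinct forced actions delete distinct edges of $S_{\opt}$, so at most $|S_{\opt}| = \opt$ deletions ever occur, the pointer never runs past $d_{\opt}$, and the total number of bits read is $\opt\cdot\log\|H\| + O(\log\opt)$.

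The main obstacle is not mathematical depth but bookkeeping: making the processing order of simultaneously-appearing copies, and the edge-indexing within a copy, completely deterministic so that the oracle's offline simulation and the online run stay in lockstep; and verifying the hitting-set invariant carefully at the intermediate steps $i < n$, since that is precisely what guarantees that the advice the oracle committed to in advance remains executable at the moment the algorithm is forced to act.
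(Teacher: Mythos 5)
Your proposal is correct and follows exactly the approach the paper intends: the paper gives no separate proof for this theorem, stating only that it follows ``with the same idea'' as Theorem~\ref{thm:trivial_ub_node_deletion}, namely self-delimiting encoding of $\opt$ followed by $\opt$ indices of size $\log\|H\|$ identifying which edge of each detected copy to delete. Your additional care about canonical orderings and the hitting-set invariant is sound and simply fills in details the paper leaves implicit.
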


In this section we show that for the \FNDD{} for connected or disconnected $\F$, as well as for the \HEDD{} this naive strategy is already the best possible. More formally, we meet these trivial upper bounds by essentially tight lower bounds for the aforementioned problems. We call lower bounds of the form $\opt \cdot \log \vert H \vert$, or $\opt \cdot \log \| H \|$ \emph{essentially tight}, because they only differ from the trivial upper bound by some logarithmic term in $\opt$. This additional term stems from the fact that the advisor must encode $\opt$ onto the advice tape in order for the online algorithm to correctly interpret the advice. If the online algorithm knew $\opt$ in advance, we would have exactly tight bounds. 

\subsection{Connected and Disconnected $\F$-Node Deletion Problems}
Chen et al.~\cite{ChenHLR21} previously proved essentially tight bounds on the advice complexity of the \FNDD{} for the case that all graphs in $\F$ are connected. They found 
a lower bound of $\opt \cdot \log \vert H \vert$ where $H$ is a maximum order graph in $\F$. Additionally, they proved a lower bound on the advice complexity of the \HNDD{} for disconnected $H$ that was dependent on a maximum order connected component $C_{max}$ of $H$: $\opt \cdot \log \vert C_{max} \vert + O(\log \opt)$. We improve this result and provide a lower bound on the advice complexity of the \FNDD{} for families $\F$ of disconnected graphs, that essentially matches the trivial upper bound from Theorem \ref{thm:trivial_ub_node_deletion}.

\begin{lemma}\label{thm:duality}
    Let $\F$ be an arbitrary obstruction set, and let $\overline{\F} := \set{\overline{H}}{H \in \F}$ be the family of complement graphs. Then the advice complexity of the \FNDD{} is the same as for the \NDD{\overline{\F}}.
\end{lemma}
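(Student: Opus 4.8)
The plan is to exhibit a bijection between online instances (together with their solutions) of the $\FNDD$ and online instances of the $\NDD{\overline{\F}}$ that preserves both the online structure and the sizes of all solution sets, and then argue that this bijection commutes with the advice model. The key observation is that complementation is an involution on graphs that sends induced subgraphs to induced subgraphs: for any $U \subseteq V(G)$ we have $\overline{G}[U] = \overline{G[U]}$, and $H \phisub G$ if and only if $\overline{H} \phisub \overline{G}$ via the same vertex map $\phi$. In particular $G$ is $\F$-free exactly when $\overline{G}$ is $\overline{\F}$-free.

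First I would set up the correspondence on instances. Given an online instance of the $\FNDD$ presenting nodes $v_1,\dots,v_n$ in order, with $G = G[\{v_1,\dots,v_n\}]$, associate to it the online instance of the $\NDD{\overline{\F}}$ that presents the same nodes in the same order but with the complementary adjacency, so that after step $i$ the revealed graph is $\overline{G[\{v_1,\dots,v_i\}]} = \overline{G}[\{v_1,\dots,v_i\}]$. Note that deleting a node from $G$ corresponds to deleting the same node from $\overline{G}$, and the induced graph on the survivors transforms by complementation at every prefix. Hence a sequence $S_1 \subseteq \dots \subseteq S_n$ is a valid solution for the $\FNDD$ instance (each $G[\{v_1,\dots,v_i\}] - S_i$ is $\F$-free) if and only if the very same sequence is a valid solution for the $\NDD{\overline{\F}}$ instance (each $\overline{G}[\{v_1,\dots,v_i\}] - S_i$ is $\overline{\F}$-free), because $\F$-freeness of a prefix graph is equivalent to $\overline{\F}$-freeness of its complement by the observation above. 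The map is its own inverse since $\overline{\overline{\F}} = \F$ and $\overline{\overline{G}} = G$, so it is a bijection, and it preserves $\opt$ exactly.

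Next I would lift this to the advice model. An online algorithm $A$ with advice for one problem is turned into an algorithm $A'$ for the other by the syntactic transformation ``interpret every revealed adjacency as its complement'': when $A'$ receives the prefix $\overline{G}[\{v_1,\dots,v_i\}]$ it simulates $A$ on $G[\{v_1,\dots,v_i\}]$, reads the same advice bits, and outputs whatever node $A$ outputs. Since the forced-deletion triggers coincide (a forbidden subgraph appears in one prefix iff its complement appears in the other) and valid partial solutions correspond as shown, $A'$ is a correct online algorithm using exactly as many advice bits as $A$ on the paired instance, and it is optimal iff $A$ is. Running this in both directions shows the two problems have identical advice complexity for optimality.

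I do not expect a serious obstacle here; the only points needing care are bookkeeping ones. One should check that the correspondence respects the \emph{incremental} (delayed) nature of the model, i.e. that it is applied prefix by prefix and that the monotonicity constraint $S_1 \subseteq \dots \subseteq S_n$ is literally the same constraint on both sides --- which it is, since the solution sets themselves are untouched by the transformation. One should also note the harmless degenerate case $\F = \emptyset$ (equivalently $\overline{\F} = \emptyset$), where both problems are trivial with $\opt = 0$. No assumption beyond those already standing in the paper is needed; in particular the $\FEDD$ side-condition about $\overline{K_n}$ is irrelevant since this lemma concerns only node deletions.
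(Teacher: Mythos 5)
Your proposal is correct and follows essentially the same route as the paper: an advice-preserving reduction that complements the online instance prefix by prefix, using the fact that $\F$-freeness of a prefix is equivalent to $\overline{\F}$-freeness of its complement, applied in both directions to get equality. Your write-up just spells out the bookkeeping (the identity $\overline{G}[U]=\overline{G[U]}$ and the preservation of the monotone solution sequence) that the paper leaves implicit.
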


\begin{proof}
    We provide an advice preserving reduction from \NDD{\F} to \NDD{\overline{\F}}: If $G$ is an online instance for the $\F$-problem, such that in the $i$th time step $G[\{v_1,...,v_i\}]$ is revealed, then we construct $\overline{G}$ as an online instance for the $\overline{\F}$-problem by revealing $\overline{G[\{v_1,...,v_i\}]}$ in the $i$th time step. An optimal online algorithm has to delete the same nodes in the same time steps in both instances. This proves that the advice complexity for the $\F$-problem is at most the advice complexity for the $\overline{\F}$-problem. The same reduction in the other direction yields equality.
    
\end{proof}

From this follows immediately the desired lower bound on the advice complexity of the \FNDD{} for disconnected $\F$.

\begin{theorem}\label{thm:lb_disconnected}
    Let $\F$ be a family of disconnected graphs, and $H \in \F$ a maximum order graph. Then any optimal online algorithm for the \FNDD{} needs $\opt \cdot \log \vert H \vert$ bits of advice.
\end{theorem}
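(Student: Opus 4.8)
The plan is to obtain Theorem~\ref{thm:lb_disconnected} as an immediate consequence of Lemma~\ref{thm:duality} together with the lower bound of Chen et al.~\cite{ChenHLR21} for \emph{connected} obstruction sets. First I would recall the elementary fact that the complement of a disconnected graph is connected: if $H$ has at least two connected components, then in $\overline{H}$ any two vertices from different components of $H$ are adjacent, and any two vertices from the same component of $H$ have a common neighbour in some other component, so $\overline{H}$ is connected. Consequently, if $\F$ is a family of disconnected graphs, then $\overline{\F} = \set{\overline{H}}{H \in \F}$ is a family of connected graphs.

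Next I would observe that $\overline{\F}$ is a legitimate obstruction set in the sense assumed throughout, i.e.\ it contains no distinct $H_1', H_2'$ with $H_1' \phisub H_2'$. This is because the induced-subgraph relation commutes with complementation: $\phi(H)$ is an induced subgraph of $G$ if and only if $\phi(\overline{H})$ is an induced subgraph of $\overline{G}$. Hence the non-redundancy assumption carries over from $\F$ to $\overline{\F}$, and the Chen et al.\ lower bound for connected families is applicable to $\NDD{\overline{\F}}$.

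Applying that lower bound, any optimal online algorithm for $\NDD{\overline{\F}}$ needs at least $\opt \cdot \log |\overline{H'}|$ bits of advice, where $\overline{H'}$ is a maximum order graph of $\overline{\F}$. Since complementation preserves the number of vertices, a maximum order graph of $\overline{\F}$ is exactly the complement of a maximum order graph $H$ of $\F$, so $|\overline{H'}| = |H|$. Finally, by Lemma~\ref{thm:duality} the advice complexity of the \FNDD{} equals that of $\NDD{\overline{\F}}$, which yields the claimed lower bound of $\opt \cdot \log |H|$ bits. I do not anticipate a real obstacle here; the only points needing a line of justification are the two folklore facts used above (complement of a disconnected graph is connected; the induced-subgraph relation is preserved under complementation), both of which are routine.
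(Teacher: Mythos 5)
Your proposal is correct and follows exactly the same route as the paper: apply Lemma~\ref{thm:duality} to reduce to $\NDD{\overline{\F}}$, note that $\overline{\F}$ consists of connected graphs, and invoke the Chen et al.\ lower bound. The extra details you supply (connectivity of the complement of a disconnected graph, preservation of the induced-subgraph relation and of order under complementation) are routine justifications the paper leaves implicit.
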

\begin{proof}
    Since all graphs in $\F$ are disconnected, $\overline{\F}$ is a family of connected graphs. For $\overline{\F}$, the lower bound proven by Chen et al. of $\opt \cdot \log \vert H \vert$ holds. The claim follows from Lemma \ref{thm:duality}.
\end{proof}

In summary, this lower bound of $\opt \cdot \log \left( \max_{H \in \F} \vert H \vert \right)$ holds for all families of graphs $\F$ that contain either only connected or only disconnected graphs. In particular, these results imply a tight lower bound on the advice complexity of the \HNDD~ for arbitrary graphs $H$, which Chen et al. raised as an open question.

\begin{corollary}\label{cor:hndd}
    Let $H$ be an arbitrary graph. Then any online algorithm for the \HNDD~ requires $\opt \cdot \log \vert H \vert$ bits of advice to be optimal.
\end{corollary}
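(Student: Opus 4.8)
The plan is to derive Corollary~\ref{cor:hndd} as an immediate consequence of the two lower bounds already established, namely Chen et al.'s bound for connected graphs and Theorem~\ref{thm:lb_disconnected} for disconnected graphs. First I would observe that any fixed graph $H$ is either connected or disconnected, so these two cases are exhaustive. If $H$ is connected, then $\F = \{H\}$ is a family consisting only of a connected graph, and the lower bound of $\opt \cdot \log \vert H \vert$ from Chen et al.\ applies directly with $H$ itself being the maximum order graph in $\F$. If $H$ is disconnected, then $\F = \{H\}$ is a family consisting only of a disconnected graph, and Theorem~\ref{thm:lb_disconnected} applies, again with $H$ being the (unique, hence maximum order) graph in $\F$, yielding the same bound $\opt \cdot \log \vert H \vert$.

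In both cases the bound obtained is exactly $\opt \cdot \log \vert H \vert$, so the corollary follows by combining the two. I would also briefly note that this matches the trivial upper bound of Theorem~\ref{thm:trivial_ub_node_deletion} up to the additive $O(\log\opt)$ term, which is why the bound is called essentially tight and why it resolves the open question of Chen et al.\ regarding the \HNDD{} for arbitrary $H$.

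There is essentially no obstacle here: the work has already been done in Lemma~\ref{thm:duality} and Theorem~\ref{thm:lb_disconnected}, together with the cited result of Chen et al. The only thing to be careful about is making the case split explicit and confirming that in the singleton case $\F=\{H\}$ the quantity $\max_{H' \in \F}\vert H'\vert$ is just $\vert H\vert$, so that both lower bounds specialize to the claimed expression.

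\begin{proof}
    If $H$ is connected, then $\F := \{H\}$ is a family of connected graphs with unique (hence maximum order) member $H$, so the lower bound $\opt \cdot \log \vert H \vert$ of Chen et al.~\cite{ChenHLR21} applies. If $H$ is disconnected, then $\F := \{H\}$ is a family of disconnected graphs with maximum order member $H$, and Theorem~\ref{thm:lb_disconnected} gives the same lower bound $\opt \cdot \log \vert H \vert$. Since every graph is either connected or disconnected, this proves the claim in all cases.
\end{proof}
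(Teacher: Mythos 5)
Your proposal is correct and matches the paper's (implicit) argument exactly: the paper derives the corollary from the observation that a singleton family $\{H\}$ consists either only of connected graphs (Chen et al.'s bound) or only of disconnected graphs (Theorem~\ref{thm:lb_disconnected}), in both cases with $\max_{H'\in\F}\vert H'\vert = \vert H\vert$. Nothing is missing.
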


We want to briefly reiterate the main steps of the lower bound proof by Chen et al. for connected $\F$. Let $H$ be a maximum order graph in $\F$. The idea is to construct $\vert H \vert^{\opt}$ different instances with optimal solution size $\opt$ such that no two of these instances can be handled optimally with the same advice string. These instances consist of the disjoint unions of $\opt$ gadgets where each gadget is constructed by gluing two copies of $H$ at an arbitrary node. This way, each gadget needs at least one node deletion, and deleting the glued node is the only optimal way to make a gadget $\F$-free. Since in each of the $\opt$ gadget we have $\vert H \vert$ choices of where to glue the two copies together, we in total construct $\vert H \vert^{\opt}$ instances. This procedure of constructing instances that have to be handled by different advice strings is a standard method of proving lower bounds on the advice complexity.

Since the proof of Theorem \ref{thm:lb_disconnected} uses this result by Chen et al. one can examine closer the instances that are constructed implicitly in this proof.
As a kind of ``dual'' approach to the disjoint gadget constructions of Chen et al., we will use the so-called \emph{join} of graphs.

\begin{definition}
    Given two graphs $G_1,G_2$.
    The \emph{join graph} $G = G_1 \join G_2$ is constructed by
    connecting each vertex of $G_1$ with each vertex of $G_2$ with an
    edge, i.e.\ $V(G) = V(G_1) \cup V(G_2)$, $E(G) = E(G_1) \cup E(G_2) \cup
    \{\,v_1v_2\mid v_1 \in V(G_1), v_2 \in V(G_2)\,\}$.
\end{definition}

First of all we look at how the gadgets for disconnected $H$ look now. Let $H$ be a maximum order graph of $\F$ where $\F$ consists of disconnected graphs. Then a gadget of $H$ is the complement graph of a gadget of the connected graph $\overline{H}$. Therefore, a gadget of $H$ is constructed by gluing two copies of $H$ at some arbitrary vertex and then joining them everywhere else. For example, for the graph \inlinetikz{1pt}{0.3}{\exgraphone} has the following three possible gadgets: \inlinetikz{-3pt}{0.3}{\exgadgettwo}, \inlinetikz{-3pt}{0.3}{\exgadgetthree}, and \inlinetikz{-3pt}{0.3}{\exgadgetone}. Here the gray vertices were used for gluing, and the upper copy was joined with the lower copy everywhere else. Since in the connected case the instance consisted of the disjoint union of gadgets, in the ``dual'' case of disconnected forbidden graphs we join them. Thus, the constructed instances are join graphs of gadgets where each gadget is the join of two copies of $H$ glued at an arbitrary vertex. Just as in the proof by Chen et al. we can construct $\vert H \vert^{\opt}$ such instances which all need different advice strings in order to be handled optimally, and therefore the lower bound of $\opt \cdot \log \vert H \vert$ holds also for disconnected $\F$.

These constructions are interesting for various reasons. First of all, we will later see that for certain mixed families $\F$ that may contain connected and disconnected graphs simultaneously, a lower bound linear in $\opt$ may still hold by using exactly these constructions. In particular we examine under which circumstances the proof of \label{thm:lb_disconnected} yields a lower bound on the advice complexity linear in $\opt$. Secondly, we will see in the next subsection that similar constructions even work for the \HEDD{} and result in an essentially tight lower bound on its advice complexity.

\subsection{H-Edge Deletion Problem}
Chen et al. previously proved a lower bound of $\opt \cdot \log \| H \|$ on the advice complexity of the \HEDD{} for connected graphs $H$ which essentially matches the trivial upper bound from Theorem \ref{thm:trivial_ub_edge_deletion}. We show that the same bound even holds if $H$ is disconnected. For the node-deletion problem with disconnected $\F$ we constructed instances that make extensive use of the join operation. We will see that similar constructions can be used for the edge-deletion case. It will be insightful to understand why exactly the join operation behaves nicely with disconnected graphs. The most important observation is summarized in the following lemma.

\begin{lemma}\label{lemma:join_lemma}
    Let $H$ be a disconnected graph and let $G_1, G_2$ be two other
    graphs. Then $G_1 \join G_2$ is $H$-free iff $G_1$ and
    $G_2$ are $H$-free.
\end{lemma}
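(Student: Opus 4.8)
The plan is to prove both directions, observing that the forward direction (if $G_1 \join G_2$ is $H$-free then $G_1$ and $G_2$ are $H$-free) is the trivial one: both $G_1$ and $G_2$ are induced subgraphs of $G_1 \join G_2$, so any induced copy of $H$ inside $G_1$ or $G_2$ is also an induced copy inside the join, giving the contrapositive at once. This needs no hypothesis on $H$.

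The substantive direction is the converse: assuming $G_1$ and $G_2$ are both $H$-free, I want to show $G_1 \join G_2$ is $H$-free. Suppose for contradiction that there is an induced subgraph of $G = G_1 \join G_2$ isomorphic to $H$, witnessed by a vertex set $W \subseteq V(G)$ with $G[W] \iso H$. Write $W_1 = W \cap V(G_1)$ and $W_2 = W \cap V(G_2)$. The key structural point is that in the join, every vertex of $W_1$ is adjacent to every vertex of $W_2$; hence if both $W_1$ and $W_2$ are nonempty, then $G[W]$ has the form $G_1[W_1] \join G_2[W_2]$, which is a join of two nonempty graphs and therefore connected. But $G[W] \iso H$ and $H$ is disconnected, a contradiction. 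So one of $W_1, W_2$ must be empty, say $W_2 = \emptyset$; then $W = W_1 \subseteq V(G_1)$ and $G_1[W_1] = G[W] \iso H$, contradicting that $G_1$ is $H$-free. The symmetric case $W_1 = \emptyset$ contradicts $H$-freeness of $G_2$.

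The only real content to spell out carefully is the claim that a join of two nonempty graphs is connected (any two vertices are joined either directly or through a vertex on the other side), together with the elementary fact that being disconnected is an isomorphism invariant, so $H$ disconnected forces any induced copy of $H$ to be disconnected. I expect no genuine obstacle here; the main thing to get right is the bookkeeping that $G[W]$ really does decompose as the join of $G[W_1]$ and $G[W_2]$ — this is immediate from the definition of the join, since edges of $G$ between $V(G_1)$ and $V(G_2)$ are present for all such pairs, and edges within each side are inherited from $G_1$ and $G_2$ respectively.
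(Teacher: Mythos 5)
Your proposal is correct and follows essentially the same argument as the paper: the forward direction is the trivial induced-subgraph observation, and the converse derives a contradiction from the fact that an induced copy of $H$ meeting both sides of the join would be connected, while a copy contained in one side would contradict $H$-freeness of $G_1$ or $G_2$. The only difference is cosmetic ordering of the two cases.
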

\begin{proof}
    Obviously, if $G_1 \join G_2$ is $H$-free the same must hold for
    all its induced subgraphs, in particular $G_1$ and $G_2$. For the 
    converse, we assume that $G_1$ and $G_2$ are $H$-free but $W 
    \subseteq V(G_1 \join G_2)$ induces an $H$ in $G_1 \join G_2$. Since 
    $G_1$ and $G_2$ are $H$-free, $W$ cannot contain only vertices 
    from one of the two graphs. Hence, there exists a $v_1 \in V(G_1) 
    \cap W$ and a $v_2 \in V(G_2) \cap W$. These two vertices are 
    connected by an edge in $G_1 \join G_2$. Then any other vertex 
    $w \in W$ is either connected to $v_1$ or $v_2$ by an edge. Thus,
    $( G_1 \join G_2)[W]$ must be connected. However, $H$ is a disconnected graph,
    therefore, $W$ cannot induce an $H$ in $G_1 \join G_2$.
    
\end{proof}

We introduce the notion of an $e$-extension of a graph $H$. Intuitively, an $e$-extension is a graph $U_H(e)$ that extends $H$ in such a way that the unique optimal way to make $U_H(e)$ $H$-free is to delete the edge $e$.

\begin{definition}
    For a disconnected graph $H$ and an edge $e \in E(H)$ we call a graph $U_H(e)$ an \emph{$e$-extension} of $H$ if it satisfies
    \begin{itemize}
        \item[(E.1)] $H \isub U_H(e)$,
        \item[(E.2)] $H \not\phisub U_H(e) - e$, and
        \item[(E.3)] $H \phisub U_H(e) - f$ for all $f \in E(U_H(e)) \setminus \{e\}$.
    \end{itemize}
    We call $H$ \emph{edge-extendable} if for every $e \in E(H)$ there is such an $e$-extension.
\end{definition}

It turns out that extendability is a sufficient condition for the desired lower bound to hold.

\begin{theorem}\label{thm:extendable_lb}
    Let $H$ be an edge-extendable graph. Then any optimal online algorithm for the \HEDD{} needs $\opt \cdot \log \| H \|$ bits of advice.
\end{theorem}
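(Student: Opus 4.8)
The plan is to mimic the node-deletion lower bound of Chen et al.\ and the disconnected-node argument sketched above, but using edge-deletions, $e$-extensions, and the join operation in place of gluing. Fix $\opt \in \N$; I will construct $\|H\|^{\opt}$ online instances, all with optimal edge-deletion cost exactly $\opt$, such that no two of them can be served optimally with the same advice string. By the standard counting argument (an advice string of length $b$ can serve at most $2^b$ distinct instances, each needing its own string), this forces $b \ge \log(\|H\|^{\opt}) = \opt\cdot\log\|H\|$, which is the claimed bound.

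The building block is a single ``edge-gadget'': for a chosen edge $e\in E(H)$, take the $e$-extension $U_H(e)$ guaranteed by edge-extendability. By (E.1) it contains an induced $H$, by (E.2) deleting $e$ destroys every induced $H$, and by (E.3) deleting any other single edge leaves an induced $H$; hence the \emph{unique} optimal way to make $U_H(e)$ $H$-free is to delete $e$, at cost $1$. The overall instance for a choice $(e_1,\dots,e_{\opt})\in E(H)^{\opt}$ is the iterated join $U_H(e_1)\join U_H(e_2)\join\cdots\join U_H(e_{\opt})$. Here Lemma~\ref{lemma:join_lemma} does the essential work: since $H$ is disconnected, a join of graphs is $H$-free iff each factor is $H$-free, and more generally the join contains an induced $H$ exactly when some single factor does. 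Consequently the $\opt$ gadgets behave \emph{independently} — a fact that fails for node-deletion gluing of disconnected graphs but is rescued here by the join — so the optimal cost of the whole instance is $\opt$ (one deletion per gadget, namely $e_j$ in the $j$-th factor), and it is attained uniquely.

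For the online presentation: reveal the vertices of $U_H(e_1)$ first (in an order that reveals the induced copy of $H$ only at the very end of that block, forcing exactly one deletion then), then those of $U_H(e_2)$ with all join-edges to the previous block, and so on; because the join edges never create a new induced $H$ across blocks (Lemma~\ref{lemma:join_lemma}, again using that $H$ is disconnected), the adversary never forces premature deletions and within block $j$ the algorithm is forced to commit to exactly one edge of that block's copy of $U_H(e_j)$. Now suppose two instances with parameter vectors $(e_1,\dots,e_{\opt})\neq(e_1',\dots,e_{\opt}')$ — say they differ in coordinate $j$ — were served by the same advice string. The two instances are identical up through the first $j-1$ blocks plus the part of block $j$ before its copy of $H$ is completed, so the algorithm (seeing the same prefix and the same advice) behaves identically on that prefix; but at the moment block $j$'s forbidden copy appears it must already have made an irrevocable set of deletions, and optimality forces that set to be exactly $\{e_j\}$ in one instance and exactly $\{e_j'\}$ in the other — a contradiction, since the algorithm's behavior on the common prefix is determined. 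Hence all $\|H\|^{\opt}$ instances require pairwise distinct advice strings.

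The main obstacle I anticipate is the bookkeeping that makes the ``independence'' rigorous in the \emph{online} (not just static) setting: one must check that when block $j$ is being revealed, adding its vertices together with all the join-edges to the already-revealed blocks $1,\dots,j-1$ never produces an induced subgraph isomorphic to $H$ other than one contained within the newly completed $U_H(e_j)$. This is where disconnectedness of $H$ is used crucially — any vertex set spanning two join-blocks induces a connected graph (the argument in the proof of Lemma~\ref{lemma:join_lemma}), so it cannot be $H$; and within a single already-completed earlier block the algorithm has already deleted the relevant $e_i$, so no $H$ survives there either. A secondary point to get right is the vertex-revelation order inside each $U_H(e_j)$, so that the algorithm is forced to act exactly once per block and cannot ``hedge'' by postponing; since $U_H(e_j)$ has a fixed finite size this is a finite case check. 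Modulo these checks, the counting bound $b\ge \opt\cdot\log\|H\|$ follows.
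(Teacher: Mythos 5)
Your global architecture matches the paper's proof: the instances are the joins $U_H(e_1)\join\cdots\join U_H(e_{\opt})$, the set $\{e_1,\ldots,e_{\opt}\}$ is the unique optimal solution by (E.1)--(E.3) together with Lemma~\ref{lemma:join_lemma}, and the counting argument over $\|H\|^{\opt}$ instances gives the bound. However, there is a genuine flaw in your online presentation, and it sits exactly at the point where the lower bound is won or lost. You propose to reveal each block ``in an order that reveals the induced copy of $H$ only at the very end of that block,'' i.e.\ the extension vertices of $U_H(e_j)$ arrive before the copy of $H$ is completed. Under that order, at the moment the algorithm is first forced to delete in block $j$ it has already seen all of $U_H(e_j)$; since $e_j$ is by (E.2) and (E.3) the \emph{unique} edge whose deletion makes $U_H(e_j)$ $H$-free, the algorithm can compute $e_j$ from the input and needs no advice at all. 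Moreover, your own indistinguishability step then fails: two instances differing in coordinate $j$ are \emph{not} identical on the prefix preceding the forced deletion, because the extensions $U_H(e_j)$ and $U_H(e_j')$ (which differ, or at least place the distinguished edge differently) have already been revealed, so the claimed contradiction does not arise.

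The fix --- and what the paper does --- is the opposite order: in phase $j$ first reveal the bare copy $H^{(j)}$, joined to everything previously revealed. This completes an induced $H$ and forces an irrevocable deletion of some edge of $H^{(j)}$ while all $\|H\|$ possible continuations are still pairwise indistinguishable; only afterwards is $H^{(j)}$ extended to $U_{H^{(j)}}(e_j)$, which by (E.3) punishes every choice other than $e_j$ with a second deletion in that block, destroying optimality. With this order the prefixes genuinely coincide across all choices of $e_j$ and the counting argument goes through. Everything else in your proposal (the role of disconnectedness via Lemma~\ref{lemma:join_lemma}, optimality from the $\opt$ edge-disjoint copies of $H$, uniqueness of the optimum) is correct and agrees with the paper.
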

\begin{proof}
    Let $m \in \N$ be arbitrary. We construct a family of instances with optimal solution size $m$ such that any optimal online algorithm needs advice to distinguish these instances. Take $m$ disjoint copies $H^{(1)}, ..., H^{(m)}$ of $H$. We denote the vertices of $H^{(i)}$ by $v^{(i)}$. Furthermore, let $e_1,...,e_m$ be arbitrary edges such that $e_i \in E(H^{(i)})$. We construct the instance $G(e_1,...,e_m)$ in $m$ phases. In the $i$th phase we reveal $H^{(i)}$ and join it with the already revealed graph from previous phases. Then we extend $H^{(i)}$ to $U_{H^{(i)}}(e_i)$ and again join the newly added vertices with the already revealed graph from the previous phases. If $G(e_1,...,e_{i-1})$ is the graph after phase $i-1$, after phase $i$ we have revealed a graph isomorphic to $G(e_1,...,e_{i-1}) \join U_{H^{(i)}}(e_i)$. Thus, $G := G(e_1,...,e_m) \iso U_{H^{(1)}}(e_1) \join ... \join U_{H^{(m)}}(e_m)$. 
    We claim that $X := \{e_1,...,e_m\}$ is the unique optimal solution for the $H$-Edge Deletion problem on $G$. Deleting all $e_i$ from $G$ yields a graph isomorphic to $(U_{H^{(1)}}(e_1) - e_1) \join ... \join (U_{H^{(m)}}(e_m) - e_m)$. By definition of an e-extension, and Lemma \ref{lemma:join_lemma}~ this graph is $H$-free. Thus $X$ is indeed a solution. It is furthermore optimal because $G$ contains $m$ edge-disjoint copies of $H$. Finally, if in one of the $U_{H^{(i)}}(e_i)$ we delete any other edge than $e_i$, by definition we need to delete at least one more edge to make $U_{H^{(i)}}(e_i)$ $H$-free. Hence, $X$ is the unique optimal solution. \\
    We can construct $\| H \|^m$ such instances that pairwise only differ in the choice of the edges $e_1,...,e_m$. Any online algorithm needs advice to distinguish these instances, and therefore requires $m \cdot \log \| H \|$ bits to be optimal on all of them. Since $m = opt$, the claim is proven.
\end{proof}

Next, we prove constructively that each disconnected graph $H$ without isolated vertices is edge-extendable. Later we deal with the case that $H$ has isolated vertices.

\begin{lemma}\label{lem:no_isol_extendable}
    Let $H$ be a disconnected graph without isolated vertices. Then $H$ is edge-extendable.
\end{lemma}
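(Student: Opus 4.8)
The plan is to give, for every edge $e = uv \in E(H)$, an explicit $e$-extension $U_H(e)$, mirroring the disconnected node-deletion gadget described earlier (two copies of $H$ glued at a vertex, joined everywhere else), but with the glued vertex replaced by the edge $e$. Concretely, I would take two disjoint copies $H_1, H_2$ of $H$, write $u_i, v_i$ for the copies of $u, v$, set $A_i := V(H_i) \setminus \{u_i, v_i\}$, and define $U_H(e)$ on vertex set $\{u,v\} \cup A_1 \cup A_2$ by identifying $u_1 = u_2 =: u$ and $v_1 = v_2 =: v$ (so $e_1 = e_2 =: e$), keeping all edges of $H_1$ and of $H_2$, and in addition adding every edge between a vertex of $A_1$ and a vertex of $A_2$. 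Since $H$ is disconnected and has no isolated vertex it has at least two components of order $\ge 2$, so $A_1, A_2 \neq \emptyset$; moreover the only edges of $U_H(e)$ not already present in $H_1$ or $H_2$ run between $A_1$ and $A_2$, so $U_H(e)[V(H_i)] = H_i$ for $i \in \{1,2\}$.

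Properties (E.1) and (E.3) are then routine. (E.1) is immediate from $U_H(e)[V(H_1)] = H_1 \iso H$ (a copy that uses $e$). For (E.3), every edge of $U_H(e)$ is an edge of $H_1$, an edge of $H_2$, or an $A_1$–$A_2$ edge; if $f \neq e$ is of one of the latter two kinds then $f \notin E(U_H(e)[V(H_1)])$, so $H_1 \iso H$ still appears in $U_H(e) - f$, and symmetrically $H_2$ survives if $f$ is an edge of $H_1$ other than $e$.

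The crux is (E.2): that $U_H(e) - e$ is $H$-free. Suppose $W$ induces a copy of $H$ in $U_H(e) - e$, and write $W_0 = W \cap \{u,v\}$, $W_1 = W \cap A_1$, $W_2 = W \cap A_2$. I would argue by cases. If $W_1 = \emptyset$ (symmetrically $W_2 = \emptyset$), then $W \subseteq V(H_2)$, so $(U_H(e)-e)[W]$ is an induced subgraph of $H_2 - e \iso H - e$, which is $H$-free since it has the vertex set of $H$ but one fewer edge. If $W_1, W_2 \neq \emptyset$ and $W_0 = \emptyset$, then $(U_H(e)-e)[W] = (U_H(e))[W_1] \join (U_H(e))[W_2]$, and each factor is an induced subgraph of $H_i - \{u,v\}$, hence has fewer than $|H|$ vertices and is $H$-free, so the join is $H$-free by Lemma~\ref{lemma:join_lemma}. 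Finally, if $W_1, W_2 \neq \emptyset$ and $W_0 \neq \emptyset$, the set $W_1 \cup W_2$ induces a connected graph (it contains a complete bipartite graph on the two nonempty sides), while each $x \in W_0 \subseteq \{u,v\}$ is non-adjacent in $U_H(e) - e$ to the other element of $\{u,v\}$, so $x$ is either isolated in $(U_H(e)-e)[W]$ or attached to $W_1 \cup W_2$; the first option contradicts that $H$ has no isolated vertex, and the second (applied to every vertex of $W_0$) forces $(U_H(e)-e)[W]$ to be connected, contradicting that $H$ is disconnected.

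The main obstacle is exactly this last case of (E.2): it requires using \emph{both} hypotheses on $H$ simultaneously — disconnectedness to rule out the connected outcome and absence of isolated vertices to rule out the isolated-vertex outcome — and it is the only place where "no isolated vertices" is genuinely needed, which is consistent with the paper treating the isolated-vertex case separately afterwards. Everything else reduces to bookkeeping about which of the added $A_1$–$A_2$ edges lie inside $V(H_1)$ or $V(H_2)$ (namely none), so that the copies $H_1$ and $H_2$ really do sit inside $U_H(e)$ as induced subgraphs.
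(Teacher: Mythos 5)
Your construction of $U_H(e)$ --- two copies of $H$ identified along the edge $e$ and joined everywhere else --- is exactly the gadget used in the paper, and your verification of (E.1)--(E.3), including the appeal to Lemma~\ref{lemma:join_lemma} and the final contradiction that pits connectivity (from the absence of isolated vertices) against the disconnectedness of $H$, matches the paper's argument; only the bookkeeping of the case split for (E.2) is organized slightly differently. The proof is correct and takes essentially the same approach as the paper.
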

\begin{proof}
    Let $xy = e \in E(H)$ be an arbitrary edge. We construct an $e$-extension $U_H(e)$ of $H$ as follows.
    Let $H'$ be a disjoint copy of $H$ with vertex set $V(H') = \set{v'}{v \in V(H)}$. Now we identify $x$ with $x'$ and $y$ with $y'$ and join $H$ with $H'$ everywhere else. We call the resulting graph $U$. An example can be seen in figure \ref{fig:ex_extension}. We verify (E.1)-(E.3) for $U$. (E.1) is trivially fulfilled. Furthermore, $U$ satisfies (E.3) since $U$ consists of two copies of $H$ that only share the edge $e$. Deleting any other edge $f$ will leave one of the two copies unchanged, and thus $H \phisub U - f$. Finally, we prove (E.2) by contradiction. Suppose that there is a set $W \subseteq V(U)$ that induces a graph isomorphic to $H$ in $U^* := U - e$. If $W$ contains neither $x$ nor $y$, we have $H \phisub U^* - \{x,y\} = (H - \{x,y\}) \join (H' - \{x',y'\})$. However, the latter is $H$-free by Lemma \ref{lemma:join_lemma}. Thus, $W$ must contain $x$ or $y$. W.l.o.g. we assume that $x \in W$. Since $H$ has no isolated vertices, $W$ contains a neighbor of $x$ in $U^*$. We have $N^{U^*}(x) = (N^{H}(x) \setminus \{y\}) \cup (N^{H'}(x') \setminus \{y'\})$, thus w.l.o.g. we may assume that there is a $v \in W \cap N^{H}(x) \setminus \{y\}$. However, there must also exist a $w' \in W \cap (V(H') \setminus \{x',y'\})$: Otherwise $W \subseteq V(H)$, yet after the edge deletion $H - e$ is $H$-free. This implies $vw' \in E(U^*)$, and every vertex except for $y$ is either connected to $v$ or $w'$. However, if $y \in W$, also some neighbor of $y$ must be in $W$, and therefore we conclude that $W$ must induce a connected graph. This contradicts the assumption, and we have proven that $U$ is indeed an $e$-extension of $H$.
    
\end{proof}

\begin{figure}
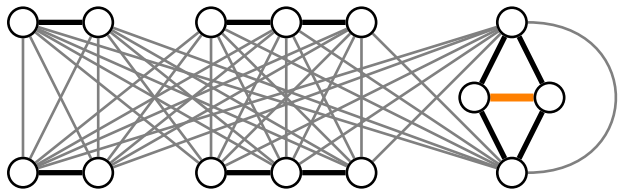

    \centering
    \exextension
    \caption{Example for the $e$-extension of the graph $P_2 \cup P_3 \cup K_3$ as constructed in Lemma \ref{lem:no_isol_extendable}. The edge $e$ is depicted in orange.}
    \label{fig:ex_extension}
\end{figure}

The results from Chen et al. together with Theorem \ref{lem:no_isol_extendable} and \ref{thm:extendable_lb} yield the following corollary.

\begin{corollary}\label{cor:heddnoisol}
    Let $H$ be a graph without isolated vertices. Then any optimal online algorithm for the \HEDD{} needs $\opt \cdot \log \| H \|$ bits of advice.
\end{corollary}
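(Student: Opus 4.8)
The plan is to prove the corollary by a single case distinction on whether $H$ is connected, invoking in each case a lower bound that is already available. The point is that the two halves of the statement are handled by the two ``dual'' lower-bound constructions discussed in this section: the disjoint-union gadget construction of Chen et al.\ for connected forbidden graphs, and the join-based construction via $e$-extensions for disconnected ones. So the proof itself is just the assembly of these pieces; all the genuine work has been done upstream.

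Concretely, first suppose $H$ is connected. Then the claimed bound $\opt \cdot \log \| H \|$ is exactly the lower bound on the advice complexity of the \HEDD{} proved by Chen et al.\ for connected graphs, and there is nothing more to do. Next suppose $H$ is disconnected. Since by hypothesis $H$ has no isolated vertices, Lemma~\ref{lem:no_isol_extendable} applies and shows that $H$ is edge-extendable, i.e.\ for every $e \in E(H)$ there is an $e$-extension $U_H(e)$ satisfying (E.1)--(E.3). Plugging this into Theorem~\ref{thm:extendable_lb} immediately gives that any optimal online algorithm for the \HEDD{} needs $\opt \cdot \log \| H \|$ bits of advice. Combining the two cases finishes the proof.

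The only bookkeeping I would check is non-vacuity: a disconnected graph with no isolated vertices has at least two components, each of which is connected on at least two vertices, hence $\| H \| \ge 1$, so $\log \| H \|$ is well defined and the bound is meaningful; and the family of $\| H \|^{\opt}$ pairwise-distinguishable instances produced in the proof of Theorem~\ref{thm:extendable_lb} has optimal value exactly $m = \opt$, which is already verified there (the instance $U_{H^{(1)}}(e_1) \join \cdots \join U_{H^{(m)}}(e_m)$ contains $m$ edge-disjoint copies of $H$ and $\{e_1,\dots,e_m\}$ is its unique optimal solution, using Lemma~\ref{lemma:join_lemma}). I do not expect any real obstacle: the substantive content is the constructive edge-extendability of Lemma~\ref{lem:no_isol_extendable} together with the information-theoretic counting in Theorem~\ref{thm:extendable_lb}, and the corollary is simply their combination with the connected case of Chen et al.
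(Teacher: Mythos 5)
Your proposal is correct and matches the paper's own derivation, which likewise obtains the corollary by combining Chen et al.'s $\opt \cdot \log \| H \|$ lower bound for connected $H$ with Lemma~\ref{lem:no_isol_extendable} and Theorem~\ref{thm:extendable_lb} for the disconnected case. The extra non-vacuity check is harmless but not needed.
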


We now turn finally to the case where $H$ has isolated nodes. We prove via a simple advice-preserving reduction from the known case to the case where $H$ has isolated vertices that the same lower bound holds.

\begin{theorem}\label{thm:heddwithisol}
    Let $H$ be a graph with $k > 0$ isolated vertices. Then any optimal online algorithm for the \HEDD{} needs at least $\opt \cdot \log \| H \|$ bits of advice.
\end{theorem}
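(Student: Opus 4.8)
The plan is to reduce the known case to the present one via an advice-preserving reduction. Write $H = H' \cup \overline{K_k}$, where $H'$ is the graph obtained from $H$ by deleting its $k$ isolated vertices. Then $H'$ has no isolated vertices, and since isolated vertices are incident to no edges we have $\|H'\| = \|H\|$. By the standing assumption that $\F$ contains no $\overline{K_n}$, the graph $H$ is not a union of isolated vertices, so $H$ -- and hence $H'$ -- has at least one edge; in particular $H'$ is nonempty and Corollary~\ref{cor:heddnoisol} applies to it, giving that the $H'$-Edge-Deletion Problem requires $\opt \cdot \log \|H'\| = \opt \cdot \log \|H\|$ bits of advice. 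It therefore suffices to show that the $H'$-Edge-Deletion Problem reduces, with no loss of advice, to the $H$-Edge-Deletion Problem.

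The reduction sends an online instance $G'$ of the $H'$-problem to the instance $G := G' \cup \overline{K_k}$ of the $H$-problem, revealed by first presenting the $k$ new isolated vertices and then the vertices of $G'$ in their original order. (The first $k$ steps force nothing, since $H$ has an edge and hence $H \not\phisub \overline{K_j}$ for all $j$.) The single structural fact this rests on is: for every graph $X$, $X$ is $H'$-free if and only if $X \cup \overline{K_k}$ is $H$-free. Indeed, given an induced copy of $H'$ in $X$, adjoining the $k$ new vertices yields an induced $H' \cup \overline{K_k} = H$ in $X \cup \overline{K_k}$; conversely, in any induced copy of $H$ in $X \cup \overline{K_k}$ the vertices playing the non-isolated vertices of $H$ have a neighbour inside the copy, so they are not among the globally isolated new vertices, hence they lie in $X$ and -- since no edge joins $V(X)$ to the $\overline{K_k}$ part -- induce a copy of $H'$ in $X$. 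Applying this fact with $X$ ranging over the revealed prefixes, and over their edge-deleted subgraphs, shows that the $G$-instance is $H$-free at a given step exactly when the $G'$-instance is $H'$-free at the corresponding step, and that -- since $E(G) = E(G')$ -- a set of edges is a valid delayed solution for $G$ iff it is one for $G'$. In particular $\opt$ is unchanged, and any online algorithm with advice that is optimal for $G$ yields, by reading off the edges it deletes, one that is optimal for $G'$ with the same advice.

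To obtain the stated bound I would now argue exactly as in the proof of Theorem~\ref{thm:extendable_lb}. For each $m$, Corollary~\ref{cor:heddnoisol} supplies a family of $\|H'\|^{m}$ instances of the $H'$-problem with $\opt = m$ that pairwise require distinct advice strings; pushing these through the reduction gives $\|H\|^{m} = \|H'\|^{m}$ instances of the $H$-problem with $\opt = m$. Were some algorithm to handle two of them optimally with the same advice string, then -- prepending the $k$ isolated vertices, which is legitimate as $k$ is a fixed constant -- it would handle the two source instances optimally with the same advice, contradicting the choice of the family. Hence any optimal algorithm for the \HEDD{} must read at least $m \log \|H\|$ bits on instances with $\opt = m$, and since $m$ was arbitrary the bound $\opt \cdot \log \|H\|$ follows.

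I expect the only delicate point to be the converse direction of the structural fact, and its use at every revealed prefix: one must confirm not just that the two final graphs correspond, but that $H$-freeness is matched with $H'$-freeness along the whole online reveal, so that the forced deletions -- and thus the entire space of valid delayed solutions -- line up step by step. Everything else is the standard advice bookkeeping already used above.
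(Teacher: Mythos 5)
Your proposal is correct and takes essentially the same route as the paper: the same advice-preserving reduction that prepends $\overline{K_k}$ to an $H'$-instance, justified by the same structural fact that $X$ is $H'$-free iff $X \cup \overline{K_k}$ is $H$-free, and then an appeal to Corollary~\ref{cor:heddnoisol}. You spell out the converse direction of the structural fact and the step-by-step alignment of forced deletions more explicitly than the paper does, but the argument is identical in substance.
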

\begin{proof}
    Let $H'$ be the graph we obtain from $H$ by deleting all isolated vertices. Of course, $\| H \| = \| H' \|$. For any graph $G'$ we have: $G'$ is $H'$-free iff $\overline{K_k} \cup G'$ is $H$-free. Let $G'$ be an online instance for the \EDD{H'}. We construct an online instance $G$ for the \HEDD{} that presents $\overline{K_k}$ in the first $k$ time steps and then continues to present $G'$ node by node. Note that $\vert G \vert = \vert G' \vert + k$. The deletions in $G'$ can be translated in to deletions in $G$ by shifting $k$ time steps. The optimal solutions of $G$ and $G'$ coincide up to this shifting by $k$ time steps, and of course $\opt_H(G) = \opt_{H'}(G') = \opt$. Thus, the advice complexity for the \EDD{H} is at least the advice complexity for the \EDD{H'}. For this problem, however, we already have a lower bound from Corollary \ref{cor:heddnoisol}. Thus, the same lower bound applies to the case where $H$ has isolated vertices. 
    
\end{proof}

\section{The \FNDD}
We have seen that for obstruction sets, in which all or none of the
graphs are connected, the advice complexity is linear in the number of
optimal deletions. This is not always the case when considering
general families of graphs $\F$ as obstruction sets. An easy example is
the following: consider the family $\F_4$ that contains all graphs
over four nodes.  Clearly, whenever any
fourth node of an online graph is revealed, a node has to be deleted.
Yet which concrete node is deleted is arbitrary for an optimal
solution as every solution will have to delete all but three nodes of
the complete instance.  Thus, no advice is needed.

A more curious observation is that there are families of forbidden
graphs that need advice that is logarithmic in the order of the
optimal number of deletions. Further, logarithmic advice is also
sufficient to optimally solve most of these problems. This is due to
the fact that depending on the forbidden family of graphs we can bound
the number of \emph{remaining} nodes of an instance after an optimal number
of deletions has been made.

Finally, we classify some families that contain connected and
disconnected graphs that require linear advice by having a closer look
at which families of forbidden graphs are compatible with a join
construction.
\subsection{Logarithmic Advice}
We start by observing that when an $\F$ contains both an independent
set and a clique, the size of the biggest graph that contains no
$H \in \F$ is bounded.
\begin{lemma}\label{lemma:ramsey}
    Let $\F$ be an arbitrary family of graphs. Then there exists a
    minimal $R \in \N$ such that all graphs of size at least $R$ are
    not $\F$-free iff $K_n, \overline{K_m} \in \F$ for some $n, m \in
    \N$.
\end{lemma}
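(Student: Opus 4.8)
The plan is to prove the two implications separately, the only non-elementary ingredient being Ramsey's theorem.

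For the direction assuming $K_n, \overline{K_m} \in \F$ for some $n,m$, I would invoke Ramsey's theorem in the form: there is a number $R(n,m) \in \N$ such that every graph on at least $R(n,m)$ vertices contains an induced $K_n$ or an induced $\overline{K_m}$. Since both $K_n$ and $\overline{K_m}$ lie in $\F$, every graph of size at least $R(n,m)$ is therefore not $\F$-free. To extract a \emph{minimal} such $R$, I would note that the property ``every graph of size at least $R$ is not $\F$-free'' is monotone in $R$: if it holds for $R$ then it holds for $R+1$, because ``size at least $R+1$'' is a special case of ``size at least $R$''. Hence the set of $R \in \N$ satisfying the property is upward closed and, by the Ramsey argument, nonempty, so it has a least element.

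For the converse, suppose such a minimal $R$ exists. Then $K_R$, which has exactly $R$ vertices, is not $\F$-free, so there is an $H \in \F$ with $H \phisub K_R$. Every induced subgraph of a complete graph is complete, so $H \iso K_n$ for some $n \le R$, giving $K_n \in \F$. Symmetrically, $\overline{K_R}$ is not $\F$-free, and since every induced subgraph of an edgeless graph is edgeless, some $H' \in \F$ equals $\overline{K_m}$ for some $m \le R$; thus $\overline{K_m} \in \F$.

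I expect essentially no obstacle here: the statement is a packaging of Ramsey's theorem. The only points needing a word of care are the existence of the minimum (handled by the monotonicity observation above) and the degenerate cases, which the argument already covers — if $\F = \emptyset$ then no graph is non-$\F$-free and neither side holds, and if $\F = \{K_1\}$ then $R = 1$ works and $K_1 = \overline{K_1}$ witnesses both $n$ and $m$.
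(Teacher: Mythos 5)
Your proposal is correct and follows essentially the same route as the paper: Ramsey's theorem for the forward implication, and the observation that large cliques (respectively independent sets) witness the converse — you merely phrase the converse directly where the paper states its contrapositive, and you add the (routine but welcome) remarks on minimality and degenerate cases.
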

\begin{proof}
    Ramsey's Theorem guarantees the existence of $R$ if $K_n,
    \overline{K_m} \in \F$. Conversely, if $\F$ contains no clique
    (independent set), then arbitrarily big cliques (independent sets)
    are $\F$-free.
\end{proof}
We can use this observation to construct an algorithm that is mainly
concerned with the remaining graph after all deletions have been made.
As the size of this graph is bounded by a constant, we can simply tell
an online algorithm when it should \emph{not} delete every node of an $H$
that it sees and which one that is.
\begin{theorem}\label{thm:nodelogkub}
    If $K_n, \overline{K_m} \in \F$ for some $n,m \in \N$, then there
    is an optimal online algorithm with advice for the \FNDD{} that uses
    $O(\log \opt)$ bits of advice.
\end{theorem}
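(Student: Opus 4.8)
The plan is to exploit Lemma~\ref{lemma:ramsey}: since $K_n, \overline{K_m} \in \F$, there is a constant $R$ such that every graph on at least $R$ vertices contains some $H \in \F$. Hence after an optimal algorithm has finished processing the whole online instance $G$ on $N$ vertices, the surviving graph $G - S_N$ has at most $R-1$ vertices, so $\opt = |S_N| \ge N - (R-1)$, i.e. $N \le \opt + R - 1$. The key idea is therefore to think of the advice not as telling the algorithm which nodes to \emph{delete}, but which at most $R-1$ nodes to \emph{keep}: the advisor fixes one optimal solution $S^{*}$, lets $K = V(G)\setminus S^{*}$ be the (constant-size) set of survivors, and communicates $K$ to the algorithm by encoding, for each vertex of $K$, its index of arrival in the online order.

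The steps I would carry out are as follows. First, the advisor computes an optimal solution $S^{*}$ offline and the survivor set $K$ with $|K| = r \le R-1$. It writes onto the tape: the number $r$ (a constant, $O(1)$ bits), then $\opt$ in self-delimiting encoding ($O(\log\opt)$ bits), and then the $r$ arrival-indices $i_1 < \dots < i_r$ of the vertices of $K$, each an integer in $\{1,\dots,N\}$ with $N \le \opt + R - 1$, costing $r\lceil \log(\opt+R)\rceil = O(\log\opt)$ bits in total. Second, the online algorithm reads $r$, $\opt$, and the indices $i_1,\dots,i_r$ up front; it now knows exactly the set $K$ of vertices it must never delete. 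Whenever an $H \in \F$ appears as an induced subgraph of the current graph minus the current solution, the algorithm must have at least one of its vertices outside $K$ — because $G[K]$ is $\F$-free, being an induced subgraph of $G - S^{*}$ — so it deletes any one such non-$K$ vertex (say, the earliest-arrived one) and adds it to its solution; it repeats until the current graph is $\F$-free. Third, I would argue correctness: by induction on time steps, the algorithm's solution $S_i$ is always contained in $V(G[\{v_1,\dots,v_i\}]) \setminus K$, the monotonicity $S_1 \subseteq \dots$ is immediate, and at the end $S_N \subseteq V(G)\setminus K = S^{*}$; since $S^{*}$ is optimal and any valid solution restricted to a sub-instance is still valid, this forces $S_N = S^{*}$, so the algorithm is optimal. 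Finally, the advice length is $O(1) + O(\log\opt) + O(\log\opt) = O(\log\opt)$, as claimed.

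One subtlety worth spelling out is why the algorithm never gets \emph{stuck}, i.e. why it can always find a deletable vertex outside $K$ whenever a forbidden subgraph is present. The forbidden induced copy of $H$ uses some vertex set $W$; if $W \subseteq K$ then $G[K]$ would contain $H$, contradicting $G[K] \isub G - S^{*}$ being $\F$-free. So $W \not\subseteq K$, and moreover $W$ contains a vertex not yet in the algorithm's solution (since the copy is present in the current graph \emph{minus} the solution), so there is a vertex to delete. A second point: the algorithm's greedy deletions might not match $S^{*}$ step-by-step, but the containment $S_i \subseteq (V \setminus K)$ together with finality $S_N \subseteq S^{*}$ and optimality of $S^{*}$ pins everything down — there is no need for the intermediate choices to be "correct", only confined to $V \setminus K$.

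The main obstacle I anticipate is purely in the bookkeeping of the encoding: making sure the $O(\log\opt)$ bound genuinely holds, which relies on $N \le \opt + R - 1$ (so that each survivor index fits in $O(\log\opt)$ bits) and on the self-delimiting encoding of $\opt$ also costing $O(\log\opt)$. Both are standard, but one must note that $R$ — and hence $r$ — is a constant depending only on the fixed family $\F$, not on the instance, so it contributes only to the hidden constant. There is no genuine combinatorial difficulty beyond the Ramsey bound already supplied by Lemma~\ref{lemma:ramsey}; the theorem is essentially a packaging of that bound into an advice scheme.
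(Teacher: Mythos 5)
Your proposal is correct and follows essentially the same route as the paper: both use Lemma~\ref{lemma:ramsey} to bound the number of surviving vertices by the constant $R-1$ and then spend $O(\log \opt)$ bits identifying those survivors so that the algorithm may freely delete everything else it encounters inside forbidden subgraphs. The only difference is the addressing scheme --- you identify each survivor by its global arrival index (exploiting $N \le \opt + R - 1$), whereas the paper identifies it by a pair (deletion round, position inside the detected forbidden subgraph); your variant is marginally cleaner since it does not require the advisor to simulate the algorithm's round structure, but both cost $O(\log \opt)$ bits.
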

\begin{proof}
    Let $R$ be as in Lemma~\ref{lemma:ramsey}, and $k$ be the size of
    the biggest graph in $\F$. Algorithm~\ref{alg:logramsey} uses at
    most
    \begin{equation*}
        \lceil \log (R-1) \rceil + \lceil (R-1) \cdot \log (\opt \cdot k) \rceil = (R-1) \log (\opt) + O(1)
    \end{equation*}
    bits of advice.  We assume that the algorithm is given $\opt$
    beforehand, which can be encoded using $O(\log \opt)$ bits of
    advice using self-delimiting encoding as in~\cite{BKKK17}. The
    advisor computes an optimal offline solution. After deleting the
    nodes from $G$, a graph with at most $R-1$ nodes must remain,
    otherwise it would not be $\F$-free by Lemma~\ref{lemma:ramsey}.
    Let $u \leq R-1$ be the number of nodes that are considered by the
    online algorithm below and that will \emph{not} be deleted. The
    advisor writes $u$ onto the tape. Next, for all those $u$ nodes
    $(v_i)_{i \leq u}$, the advisor computes in which round the
    algorithm considers this node for the first time. 
    A node is
    considered by the algorithm if it is part of an induced forbidden
    graph $H \in \F$ that at some point is recognized. The node
    $(v_i)$ can thus be identified by a pair $(r_i, a_i) \in
    \{1,\ldots,\opt\} \times \{1,\ldots,k\}$. Then the algorithm encodes all
    these pairs $\left((r_i, a_i)\right)_{i \leq u}$ onto the tape.

    The algorithm starts by reading $u$ and these pairs from the tape.
    Then it sets its \emph{round counter} $r$ to $1$, and the set of
    \emph{fixed nodes} (i.e.\ the set of nodes that it encounters and
    which will not be deleted) $F$ to $\emptyset$. Whenever the
    algorithm finds a forbidden induced subgraph it checks in the list
    $(r_i, a_i)$ which of its nodes it must not delete, and adds them
    to $F$. Then it deletes any other vertex from $W \setminus F$.
    
\end{proof}
    \begin{algorithm}
        \caption{Optimal Online Algorithm with Logarithmic Advice}
        \begin{algorithmic}[1]
            \State Read $\lceil \log (R-1) \rceil$ bits of advice, interpret as number $u \in \{1,\ldots,R-1\}$
            \State Read $\lceil u \cdot \log (k \cdot \opt) \rceil$ bits of advice, interpret as $u$ pairs $\left((r_i, a_i)\right)_{i \leq u} \subseteq \{1,\ldots,\opt\} 
            \times \{1,\ldots,k\}$
            \State $r \gets 1, F \gets \emptyset$
            \ForAll{$t = 1,\ldots,T}$
                \State $G_t \gets $ reveal next node
                \While{$G_t[W] \iso H \in \F$ for some $W \subseteq V(G_t)$}
                    \ForAll{$i=1,\ldots,u$}
                        \If{$r == r_i$}
                            \State $v_i \gets a_i$'th vertex of $W$
                            \State $F \gets F \cup \{v_i\}$
                        \EndIf
                    \EndFor
                    \State Delete any vertex from $W \setminus F$ (or all at once)
                    \State $r \gets r + 1$
                \EndWhile
            \EndFor
        \end{algorithmic}
        \label{alg:logramsey}
    \end{algorithm}

This proof implies that, given some $\F$, if we can always bound the
size of the graph remaining after deleting an optimal number of nodes
by a constant, we can construct an algorithm that solves the \FNDD{}
with advice logarithmic in \opt.  Under certain conditions we also get
a lower bound logarithmic in $\opt$ as we will see in the following
two theorems.

\begin{theorem}\label{thm:nodelogklbuni}
    Let $K_n, \overline{K_m} \in \F$, and let $D$ be a graph that is
    $\F$-free, $\vert D \vert = R-1$, and $\| D \|$ is maximal among
    such graphs. If $D$ has no universal vertex, then any optimal online
    algorithm for the \FNDD{} needs $\Omega(\log \opt)$ bits of advice.
\end{theorem}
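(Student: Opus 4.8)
The plan is to build, for every sufficiently large $m$, a family of $m$ online instances $I_1,\dots,I_m$ that all have the same optimal value $\opt=\Theta(m)$ but such that no single advice string is optimal for two of them; this yields the bound since $\log m=\Omega(\log\opt)$. Following the standard scheme, it suffices to exhibit instances sharing a long common prefix during which an optimal algorithm is repeatedly \emph{forced} to delete, such that (i) in each forced step there are exactly two ``locally safe'' reactions, a \emph{default} one and an \emph{alternative} one, and (ii) the global optimum of $I_j$ is attained only by the reaction pattern that takes the alternative in the $j$-th forced step and the default everywhere else. Since all the $I_j$ agree on the whole common prefix, a fixed advice string forces a fixed sequence of reactions, which can be optimal for at most one $I_j$; hence at least $\lceil\log m\rceil$ bits are needed.

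First I would isolate a \emph{choice gadget} built on a copy of $D$. By Lemma~\ref{lemma:ramsey}, once a copy of $D$ has been revealed, any $\F$-free graph kept by an optimal solution has at most $R-1$ vertices, and maximality of $\|D\|$ lets me arrange the instance so the survivor is forced to be (a copy of) $D$ itself. The hypothesis that $D$ has \emph{no} universal vertex is exactly what allows me to attach to a copy of $D$ one new ``trap'' vertex $t$, adjacent to a proper nonempty subset $S\subsetneq V(D)$, so that: (a) $D+t$ contains some $H\in\F$; (b) deleting $t$ restores the copy of $D$ (the default reaction); and (c) there is a distinguished non-neighbour $u$ of $t$ in $D$ — which exists precisely because $u$ is not universal — such that deleting $u$ instead also restores a copy of $D$, now with $t$ occupying a position different from the one $u$ occupied (the alternative reaction). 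One also has to check $S$ can be chosen so that $t$ and $u$ are the only single deletions keeping an optimal solution alive. Both reactions cost one deletion, but they leave non-isomorphically embedded copies of $D$, which later revealed vertices can distinguish.

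Next I would chain $m$ such gadgets on one fixed skeleton copy $D_0$ of $D$ (reusing $D_0$ after each gadget, either unchanged or with its distinguished vertex swapped for the trap), and then append a short \emph{verification block}: a constant number of further vertices, depending on $j\in\{1,\dots,m\}$, engineered so that the resulting graph becomes $\F$-free without any deletion beyond the $m$ forced ones if and only if the skeleton was modified in gadget $j$ and left at default in all others; every other reaction pattern forces at least one extra deletion that can never be recovered. This defines $I_j$ with $\opt(I_j)=m$, and the counting argument from the first paragraph finishes the proof, using $\opt=\Theta(m)$.

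The main obstacle I anticipate is the verification block: its $\F$-freeness condition must depend on ``the skeleton was modified in exactly gadget $j$'' — not on ``at least once'' (cheap to satisfy, giving no lower bound) and not on ``some fixed subset'' of size $\neq 1$ (which would either push the bound to linear or make the family inconsistent) — while simultaneously keeping $\opt(I_j)=m$ and not creating spurious forbidden subgraphs earlier in the chain, where copies of $D$ glued to traps sit next to one another. This is precisely where both hypotheses on $D$ are essential: maximality of $\|D\|$ rules out alternative $(R-1)$-vertex survivors that would let an algorithm sidestep the gadgets, and the absence of a universal vertex guarantees each gadget has a genuine, detectable second reaction for the verification block to test.
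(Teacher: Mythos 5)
Your write-up is a plan rather than a proof, and the two places you yourself flag as delicate are exactly where it breaks down. First, the choice gadget: for the ``alternative reaction'' you need a trap vertex $t$ attached along $S\subsetneq V(D)$ and a non-neighbour $u$ of $t$ such that $(D+t)-u$ is again a copy of $D$, and such that $t$ and $u$ are the \emph{only} locally optimal single deletions. The absence of a universal vertex only guarantees that $t$ can be given a non-neighbour; it does not guarantee that the swap $u\mapsto t$ realises an isomorphism of $D$, nor that no third vertex works, and neither hypothesis on $D$ (edge-maximality, no universal vertex) is used in a way that would force this for an arbitrary $D$. Second, the verification block is never constructed, and without it the $m$ instances are not pairwise distinguishable, so the counting argument has nothing to count. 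As it stands, no concrete family of instances is exhibited.

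The paper's proof sidesteps all of this with a single-shot construction that you should compare against. Let $c<n$ be the clique number of $D$. Reveal $K_{\opt+c}$ first: since $K_n\in\F$, the algorithm must immediately delete clique vertices, committing to a small surviving subset before anything else is seen. Then complete the instance to a graph isomorphic to $K_{\opt}\join D$, where the maximum clique of the attached $D$ is a chosen $c$-subset $\{u_1,\dots,u_c\}$ of the original clique. Deleting the $\opt$ non-chosen clique vertices leaves exactly $D$, and this is the \emph{unique} optimum: any other optimal solution keeps some clique vertex outside $\{u_1,\dots,u_c\}$, which is universal in the $(R-1)$-vertex remainder; since $D$ has no universal vertex, that remainder is an $\F$-free graph on $R-1$ vertices with strictly more edges than $D$, contradicting the edge-maximality of $\|D\|$. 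This is the precise (and only) role of your two hypotheses. The $\binom{\opt+c}{c}$ choices of the $c$-subset must be distinguished while only the clique is visible, giving $\log\binom{\opt+c}{c}=\Omega(\log\opt)$. If you want to salvage your approach, you would have to actually build the gadget and the verification block for an arbitrary such $D$; the paper's construction shows this machinery is unnecessary.
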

\begin{proof}
    Let $c$ be the size of the biggest clique in $D$. Of course $c <
    n$, otherwise $K_n \isub D$. Let $\opt > n$, and select $c$
    distinct numbers $u_1,\ldots,u_c \in \{1,\ldots,\opt+c\}$. Reveal
    $K_{\opt + c}$, then continue to add $\vert D \vert - c$ nodes and
    join them with all nodes of the $K_{\opt + c}$ except
    $u_1,\ldots,u_c$. Between these newly added vertices and
    $\{u_1,\ldots,u_c\}$ add edges such that they form a graph isomorphic
    to $D$. The entire resulting graph $G$ is then isomorphic
    to $K_{opt} \join D$. We claim that the unique optimal way to make
    $G$ $\F$-free is to delete $X := V(K_{\opt})$.
    
    First we observe that $X$ is a solution, because $G - X \iso D$.
    It is also optimal because if we delete fewer vertices then we are
    left with more than $\opt + \vert D \vert - \opt = \vert D \vert$
    nodes, and the resulting graph is not $\F$-free according to
    Lemma~\ref{lemma:ramsey}. For the uniqueness, suppose there was
    another optimal solution $X' \neq X$. Let $\overline{X},
    \overline{X'}$ denote the respective complement sets, i.e.\ the
    nodes that are not deleted in the respective optimal solution.
    Then because $D$ has no universal vertex we have
    \begin{equation*}
        \| G[\overline{X}] \| 
        < \| G[\overline{X} \cap \overline{X'}]\| + \vert \overline{X'} \setminus \overline{X} \vert \cdot (\vert D \vert - 1)
        = \| G[\overline{X'}] \|.
    \end{equation*}
    This would contradict the assumption that $D$ was chosen such that
    $\| D \|$ is maximized. Thus, we can construct $\binom{\opt +
    c}{c}$ such instances, each of which has a unique optimal
    solution, and the solutions are pairwise different. The nodes must
    be deleted already when the $K_{\opt + c}$ is revealed, hence we
    get a lower bound on the advice complexity of
    $\log \left( \frac{(\opt + c)\cdot \ldots \cdot (\opt +1)}{c!} \right) = \Omega(\log \opt).$
\end{proof}

With a similar construction for independent sets instead of cliques we
get the following sufficient condition for the necessity of
logarithmic advice.
\begin{theorem}\label{thm:nodelogklbisol}
    Let $K_n, \overline{K_m} \in \F$, and let $D$ be a graph that is
    $\F$-free, $\vert D \vert = R-1$, and $\| D \|$ is minimal among
    such graphs. If $D$ has no isolated vertex, then any optimal
    online algorithm for the \FNDD{} needs $\Omega(\log \opt)$ bits of
    advice.
\end{theorem}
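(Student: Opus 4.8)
The plan is to mirror the proof of Theorem~\ref{thm:nodelogklbuni}, replacing every ingredient by its ``complement'': cliques become independent sets, the join becomes the disjoint union, and the pair (no universal vertex, $\|D\|$ maximal) becomes the pair (no isolated vertex, $\|D\|$ minimal). Let $a$ be the independence number of $D$. Since $D$ is $\F$-free and $\overline{K_m}\in\F$, a maximum independent set of $D$ cannot have $m$ vertices, so $a\le m-1$; moreover $a\ge 1$, because $D$ has no isolated vertex and hence at least one vertex (if $D$ were empty the statement is vacuous). Fix $\opt>m$ and an $a$-element set $U=\{u_1,\dots,u_a\}\subseteq\{1,\dots,\opt+a\}$. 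The instance $G(U)$ is built in two phases: first reveal the independent set $\overline{K_{\opt+a}}$ node by node; then reveal $\vert D\vert-a$ further nodes and add edges among them and between them and $u_1,\dots,u_a$ so that $U$ together with the new nodes induces a copy of $D$, with $U$ placed on a maximum independent set of $D$, and add no other edges. The resulting graph is isomorphic to $\overline{K_\opt}\cup D$, independently of $U$, where the copy of $\overline{K_\opt}$ consists of the $\opt$ nodes of $\{1,\dots,\opt+a\}$ outside $U$.

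Next I would show that $X:=V(\overline{K_\opt})$ is the \emph{unique} optimal solution for the \FNDD{} on $G(U)$. It is a solution since $G(U)-X\iso D$ is $\F$-free, and optimal because deleting fewer than $\opt$ nodes leaves at least $R$ vertices, which cannot be $\F$-free by Lemma~\ref{lemma:ramsey}; hence $\opt$ is the optimal solution size. For uniqueness, suppose $X'\neq X$ is another optimal solution, and let $\overline X,\overline{X'}$ be the retained vertex sets, both of size $R-1=\vert D\vert$. Then $\overline{X'}\setminus\overline X$ is nonempty and consists of isolated vertices of $G(U)$, so these vertices remain isolated in $G(U)[\overline{X'}]$, giving $\|G(U)[\overline{X'}]\|=\|G(U)[\overline X\cap\overline{X'}]\|$. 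On the other hand $\overline X\cap\overline{X'}$ is obtained from $\overline X=V(D)$ by deleting the (equally many, nonempty) vertices $\overline X\setminus\overline{X'}$, each of which has a neighbour in $D$ since $D$ has no isolated vertex; hence at least one edge is destroyed and $\|G(U)[\overline X\cap\overline{X'}]\|<\|D\|$. So $G(U)[\overline{X'}]$ is an $\F$-free graph on $R-1$ vertices with strictly fewer edges than $D$, contradicting the minimality of $\|D\|$.

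Finally I would run the standard counting argument over the $\binom{\opt+a}{a}$ instances $G(U)$. While $\overline{K_{\opt+a}}$ is being revealed, the presence of $\overline{K_m}\in\F$ forces every optimal algorithm to delete all but at most $m-1$ of these nodes, so after this phase at least $\opt+a-(m-1)$ of them have been deleted; call this set $S$. Since deletions are irrevocable and the unique optimum $X=\{1,\dots,\opt+a\}\setminus U$ retains exactly the $a$ nodes of $U$, we must have $S\cap U=\emptyset$, hence $U\subseteq\{1,\dots,\opt+a\}\setminus S$, a set of size at most $m-1$. Because the first $\opt+a$ steps are identical across all instances $G(U)$, a fixed advice string forces the same set $S$ on all of them and therefore serves at most $\binom{m-1}{a}$ instances. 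Thus any optimal algorithm needs at least $\log\binom{\opt+a}{a}-\log\binom{m-1}{a}=\Omega(\log\opt)$ bits of advice, using $a\ge 1$. The step I expect to need the most care is this last one: one must argue that $\overline{K_m}\in\F$ genuinely forces the algorithm to commit to almost all of its deletions already during the independent-set phase and that each such commitment is compatible with only a constant number of target sets $U$; by contrast, the edge-counting uniqueness argument and the fact that the intermediate deletions can be realised consistently (deleting further isolated nodes greedily as the copy of $D$ is built up) are routine.
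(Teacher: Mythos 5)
Your proof is correct and is exactly the dualization the paper intends: the paper gives no explicit proof of this theorem, remarking only that it follows ``with a similar construction for independent sets instead of cliques,'' and your argument carries that construction out faithfully (disjoint union in place of join, isolated vertices and minimal $\|D\|$ in place of universal vertices and maximal $\|D\|$, with uniqueness of the optimum again established by an edge-counting contradiction). Your final counting step is in fact slightly more careful than the paper's own treatment of the clique case, since you explicitly account for the up to $\binom{m-1}{a}$ instances that a single retained set from the common prefix can serve.
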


\subsection{Linear Advice}
It is trivial to solve the \FNDD{}
using linear advice, as an algorithm can simply ask for each $H$ that
it encounters which node the optimal one to delete is. We have also seen
that for non-mixed $\F$ this amount of advice is also necessary.
We now show that for many mixed families $\F$, linear advice is
also necessary.

For this, we take a closer look at the graphs that remain after a glue
operation followed by a deletion at the glued node.  We characterize
the families of graphs for which this construction ensures that no
unintended copies of some $H \in \F$ are created by gluing. It
follows that all families $\F$ that are either \emph{sub-$H$-unions}
or \emph{sub-$H$-joins} as defined in the following require advice
linear in \opt.

Notice that we may w.l.o.g.\ assume that $H$ is connected, otherwise
consider the $\overline{\F}$-problem. However, for completeness we
cover both cases here.

\begin{definition}
    For some graph $H$, we call a graph $G$ a \emph{sub-$H$-union}  if
    all connected components of $G$ are proper induced subgraphs of
    $H$.
\end{definition}

Notice that in the construction from Chen et al.~\cite{ChenHLR21} (Theorem 3)
for $H$, after all nodes have been deleted, any induced subgraph must
be a sub-$H$-union. Conversely, for every sub-$H$-union $G$, one of
the constructions in the work of Chen et al. would contain $G$ as an
induced subgraph after all nodes have been deleted. Hence, the
constructions work for $H$ iff $\F$
contains apart from $H$ only graphs that are \emph{not} sub-$H$-
unions. It immediately follows from the definition that a graph is
not a sub-$H$-union iff one of its connected components is not a
proper induced subgraph of $H$. Note in particular that $H$ is not a
sub-$H$-union iff $H$ is connected. Thus, the above construction only
makes sense if $H$ is connected.

\begin{lemma}
    $G$ is not a sub-$H$-union iff $G$ has an induced subgraph $U$ of
    size at most $\vert H \vert$ that is connected and not a proper
    induced subgraph of $H$.
\end{lemma}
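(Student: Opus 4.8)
The plan is to prove both implications directly, relying on the elementary fact that every connected graph on more than $k$ vertices contains a connected induced subgraph on exactly $k$ vertices. Throughout, I use that a \emph{proper} induced subgraph of $H$ is one isomorphic to $H[W]$ for some $W \subsetneq V(H)$; consequently, any graph isomorphic to an induced subgraph of $H$ on strictly fewer than $\vert H \vert$ vertices is a proper induced subgraph of $H$, whereas a graph on exactly $\vert H \vert$ vertices is never a proper induced subgraph of $H$.

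For the ``if'' direction, suppose $G$ has an induced subgraph $U$ that is connected, has $\vert U \vert \le \vert H \vert$, and is not a proper induced subgraph of $H$. Since $U$ is connected and induced in $G$, all of its vertices lie in a single connected component $C$ of $G$, so $U$ is an induced subgraph of $C$. I claim $C$ is not a proper induced subgraph of $H$, which already shows $G$ is not a sub-$H$-union. Indeed, if $C$ were isomorphic to some $H[W]$ with $W \subsetneq V(H)$, then $U$, being an induced subgraph of $C$, would be isomorphic to an induced subgraph of $H$ on at most $\vert C \vert < \vert H \vert$ vertices, hence a proper induced subgraph of $H$ — contradicting the choice of $U$.

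For the ``only if'' direction, suppose $G$ is not a sub-$H$-union, so some connected component $C$ of $G$ is not a proper induced subgraph of $H$. If $\vert C \vert \le \vert H \vert$, take $U := C$: it is connected, has size at most $\vert H \vert$, and is not a proper induced subgraph of $H$ by hypothesis. If $\vert C \vert > \vert H \vert$, fix a spanning tree $T$ of $C$ and repeatedly delete a leaf of the current tree until exactly $\vert H \vert$ vertices remain; call the surviving vertex set $W$. Removing a leaf keeps a tree connected, so $T[W]$ is a subtree, and hence $U := C[W]$ is connected because it contains all edges of $T[W]$. Since $\vert U \vert = \vert H \vert$, the graph $U$ is not a proper induced subgraph of $H$, and it is the required subgraph.

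The argument is essentially routine bookkeeping about the size-based characterization of ``proper''; the only point that needs a moment of care is the leaf-peeling step, where one must verify both that deleting a leaf preserves connectivity of the tree and that the induced subgraph $C[W]$ in $G$ remains connected since it retains all edges of $T[W]$.
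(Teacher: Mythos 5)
Your proof is correct and follows essentially the same route as the paper's: both directions reduce to the connected component containing (or equal to) the witness $U$, with the case $\vert C \vert > \vert H \vert$ handled by extracting a connected induced subgraph on exactly $\vert H \vert$ vertices. You simply spell out two steps the paper leaves implicit — the leaf-peeling argument for that extraction, and the fact that an induced subgraph of a proper induced subgraph of $H$ is again a proper induced subgraph of $H$ — both of which are correct.
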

\begin{proof}
    If $G$ is not a sub-$H$-union, then one of its connected
    components $C$ is not a proper induced subgraph of $H$. If this
    component has at least $\vert H \vert$ nodes, then it also contains a
    connected induced subgraph with $\vert H \vert$ nodes. This
    subgraph is then not a proper induced subgraph of $H$. If on the
    other hand $\vert C \vert < \vert H \vert$, then $C$ is not a
    proper induced subgraph of $H$ and ($\Rightarrow$) is proven.

    For ($\Leftarrow$), let $C$ be the connected component that
    contains $U$. Then $C$ cannot be a proper induced subgraph of $H$,
    thus $G$ is not a sub-$H$-union.
\end{proof}

This allows us to characterize the family of graphs that are not
sub-$H$-unions by a \emph{finite} family. We use the following
notation.

\begin{definition}
    For a set of graphs $\G$ we denote the set of all induced
    supergraphs of the graphs $G \in \G$ by $\G^{\uparrow}$.
\end{definition}
 
\begin{theorem}\label{thm:finite_family_not_subH_union}
    The set of all graphs that are not sub-$H$-unions is
    \begin{align*}
        &\set{G}{G \text{ connected}, \vert G \vert \leq \vert H \vert, G \text{ not a proper induced subgraph of } H}^{\uparrow} \\
    =& \left( \set{G}{G \text{ connected}, \vert G \vert < \vert H \vert, G \not\isub H}  \cup \set{G}{G \text{ connected}, \vert G \vert = \vert H \vert}\right)^{\uparrow}
    \end{align*}
\end{theorem}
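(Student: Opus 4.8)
The plan is to show the two sides of the first equality by a direct double-inclusion argument, and then obtain the second equality by a routine splitting of the finite generating family. Let me write $\G := \set{G}{G \text{ connected}, \vert G \vert \leq \vert H \vert, G \text{ not a proper induced subgraph of } H}$ for the asserted finite generating set. First I would establish $\G^{\uparrow} \subseteq \set{G}{G \text{ not a sub-}H\text{-union}}$: given any induced supergraph $G'$ of some $G \in \G$, the component $C$ of $G'$ that contains the connected graph $G$ must itself contain $G$ as an induced subgraph, so $C$ is not a proper induced subgraph of $H$ (since already $G$ is not), hence $G'$ is not a sub-$H$-union. For the reverse inclusion I would invoke the immediately preceding lemma: if $G'$ is not a sub-$H$-union, that lemma gives a connected induced subgraph $U \subseteq G'$ with $\vert U \vert \leq \vert H \vert$ that is not a proper induced subgraph of $H$; then $U \in \G$ and $G' \in \G^{\uparrow}$.

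For the second equality, I would simply partition $\G$ according to whether $\vert G \vert < \vert H \vert$ or $\vert G \vert = \vert H \vert$. In the first case, "not a proper induced subgraph of $H$" together with $\vert G\vert < \vert H\vert$ is equivalent to $G \not\isub H$: a graph strictly smaller than $H$ is a proper induced subgraph of $H$ exactly when it is an induced subgraph of $H$ at all. In the second case, a connected graph on exactly $\vert H\vert$ vertices can never be a \emph{proper} induced subgraph of $H$ (a proper induced subgraph has strictly fewer vertices), so the condition "not a proper induced subgraph of $H$" is vacuous and only connectedness plus $\vert G\vert = \vert H\vert$ remains. Since $(\A \cup \B)^{\uparrow} = \A^{\uparrow} \cup \B^{\uparrow}$ for any sets of graphs $\A, \B$ — which is immediate from the definition of $\uparrow$ — splitting the generating set splits the closure, giving the displayed union.

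I do not expect a genuine obstacle here; the only point requiring a little care is the boundary case $\vert G \vert = \vert H \vert$, where one must remember that "proper" forbids equality of vertex counts, so such a $G$ is automatically not a proper induced subgraph regardless of its edge set — in particular $H$ itself lands in this part of the family, consistent with the earlier remark that $H$ is not a sub-$H$-union iff $H$ is connected. The finiteness claim implicit in the statement is then clear, since up to isomorphism there are only finitely many graphs on at most $\vert H \vert$ vertices.
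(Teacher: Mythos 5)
Your proposal is correct and follows exactly the route the paper intends: the paper states this theorem without proof as an immediate consequence of the preceding lemma (which supplies your reverse inclusion), and your forward inclusion and the case split on $\vert G \vert < \vert H \vert$ versus $\vert G \vert = \vert H \vert$ are the straightforward remaining details. Your handling of the boundary case, where ``proper'' forces strictly fewer vertices so the subgraph condition becomes vacuous, matches the reading the paper relies on.
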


\begin{definition}
    We call a graph $G$ a \emph{join graph} if we can partition
    $V(G)$ into two disjoint non-empty subsets $V_1, V_2$ such that $G
    = G[V_1] \join G[V_2]$. \\
    We call a partition of $V(G)$ into disjoint non-empty subsets
    $V_1,\ldots,V_k$ a \emph{join decomposition} of $G$ if the
    following properties are fulfilled:
    \begin{enumerate}
        \item $G = G[V_1] \join \ldots \join G[V_k]$.
        \item None of the $G[V_i]$ are join graphs.
    \end{enumerate}
    We call the graphs $G[V_i]$ \emph{join components} of $G$.
\end{definition}

\begin{lemma}\label{lem:join_decomp}
    Every graph has a unique join decomposition that can be computed in time $O(\vert G \vert^2)$.
\end{lemma}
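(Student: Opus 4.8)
The plan is to prove existence and uniqueness separately, and then observe that the construction can be turned into a quadratic-time algorithm. For existence, I would argue by induction on $\vert V(G)\vert$. If $G$ is not a join graph, then the trivial partition $\{V(G)\}$ is already a join decomposition. Otherwise $G = G[V_1] \join G[V_2]$ for some nontrivial bipartition; apply the induction hypothesis to $G[V_1]$ and $G[V_2]$ to obtain join decompositions of each, and observe that concatenating the two lists of parts gives a partition of $V(G)$ whose pieces are non-join graphs and whose join over all pieces reconstructs $G$ (since $\join$ is associative). This yields a join decomposition of $G$.

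For uniqueness, the natural route is to exhibit a canonical, partition-independent description of the join components. I would show that the relation $\sim$ on $V(G)$ defined by $u \sim v$ iff $u=v$ or $\overline{G}$ (the complement) has $u$ and $v$ in the same connected component is an equivalence relation whose classes are exactly the vertex sets of the join components. The key observation is that $G = G[W_1]\join\cdots\join G[W_k]$ holds precisely when the $W_i$ are unions of connected components of $\overline{G}$; equivalently, $G$ is a join graph iff $\overline{G}$ is disconnected. Hence each $G[V_i]$ in a join decomposition, being non-join, must induce a connected subgraph of $\overline{G}$, and since the $V_i$ together cover $V(G)$ and are pairwise ``completely joined'' (no non-edges of $G$ between distinct parts, i.e.\ no edges of $\overline{G}$ between distinct parts), the $V_i$ must be exactly the connected components of $\overline{G}$. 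This pins down the decomposition uniquely, up to the order of the parts.

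For the running time, computing $\overline{G}$ and its connected components takes $O(\vert V(G)\vert^2)$ time (one must touch every pair of vertices to build the complement, and a BFS/DFS on a graph with $O(\vert V(G)\vert^2)$ edges is within the same bound); the induced subgraphs $G[V_i]$ are then read off directly. The main obstacle I anticipate is the uniqueness argument: one must be careful that the two defining conditions of a join decomposition (the join identity, and each part being non-join) genuinely force the parts to coincide with the complement-components, rather than merely being a coarser or finer partition. The cleanest way to close this is the complement characterization above — ``join graph'' is exactly ``disconnected complement'' — which converts the statement into the standard uniqueness of the connected-component partition of a graph.
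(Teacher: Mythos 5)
Your proposal is correct and rests on exactly the same observation as the paper's (one-line) proof: the join components are precisely the connected components of the complement graph $\overline{G}$, since being a join graph is equivalent to having a disconnected complement. You simply spell out the existence, uniqueness, and running-time details that the paper leaves implicit.
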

\begin{proof}
    The graphs induced by vertex sets of the join components are exactly the connected components of the complement graph $\overline{G}$.
\end{proof}

\begin{definition}
    We call a graph $G$ a \emph{sub-$H$-join} if all join components of $G$ are proper induced subgraphs of $H$.
\end{definition}

Again we observe that the constructions from
Chen et al. leave behind exactly sub-$H$-joins. So,
these constructions work iff $\F$ contains apart from $H$ only graphs
that are not sub-$H$-joins, i.e.\, those graphs that have a join
component that is not a proper induced subgraph of $H$.  Disconnected
graphs~$H$ are not sub-$H$-joins of themselves. One might suppose that
analogously to the union construction, $H$ is a
not sub-$H$-join iff $H$ is disconnected. This is wrong, however,
because $P_4$ is not a join graph, and thus also not a join of proper
induced subgraphs of itself. Thus, the join construction not only
works for disconnected graphs~$H$, but also for some connected ones.

\begin{lemma}
    $G$ is a sub-$H$-join iff $\overline{G}$ is a sub-$\overline{H}$
    union.
\end{lemma}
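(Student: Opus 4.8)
The statement to prove is: $G$ is a sub-$H$-join iff $\overline{G}$ is a sub-$\overline{H}$-union. The natural strategy is to translate each side of the biconditional through complementation, using the two structural facts already established in the excerpt: first, by Lemma~\ref{lem:join_decomp} (and its proof), the join components of $G$ are precisely the induced subgraphs $G[V_i]$ where the $V_i$ are the vertex sets of the connected components of $\overline{G}$; and second, complementation turns the join into the disjoint union and vice versa, so that the connected component $\overline{G}[V_i]$ of $\overline{G}$ is exactly $\overline{G[V_i]}$. Thus the join decomposition of $G$ corresponds component-by-component to the decomposition of $\overline{G}$ into its connected components, with each join component $G[V_i]$ being the complement of the corresponding connected component of $\overline{G}$.

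**Carrying it out.** First I would spell out that correspondence precisely: if $V_1,\ldots,V_k$ is the join decomposition of $G$, then $\overline{G}[V_1],\ldots,\overline{G}[V_k]$ are exactly the connected components of $\overline{G}$, and $\overline{G}[V_i] = \overline{G[V_i]}$ for each $i$. Next, the definitions unwind as follows. By Definition, $G$ is a sub-$H$-join iff every join component $G[V_i]$ is a \emph{proper induced subgraph} of $H$; and $\overline{G}$ is a sub-$\overline{H}$-union iff every connected component of $\overline{G}$ is a proper induced subgraph of $\overline{H}$, i.e.\ iff every $\overline{G[V_i]}$ is a proper induced subgraph of $\overline{H}$. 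So the whole statement reduces to the single equivalence, for each fixed $i$: $G[V_i] \phisub H$ with $|G[V_i]| < |H|$ if and only if $\overline{G[V_i]} \phisub \overline{H}$ with $|\overline{G[V_i]}| < |\overline{H}|$. The order condition is immediate since $|A| = |\overline{A}|$ for any graph $A$ and $|H| = |\overline{H}|$. For the "induced subgraph" part, I would invoke the elementary fact that $A$ is an induced subgraph of $B$ (up to isomorphism) if and only if $\overline{A}$ is an induced subgraph of $\overline{B}$: an induced subgraph picks out a vertex subset and keeps all edges and non-edges among it, and complementation swaps edges with non-edges uniformly, so this is preserved. Applying this with $A = G[V_i]$ and $B = H$ closes the equivalence, and combining over all $i$ finishes the proof in both directions.

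**Where the work is.** There is no real obstacle here — the lemma is essentially a bookkeeping exercise once the component-wise correspondence is in place. The only point that needs a sentence of care is making sure the word "proper" is handled correctly: "proper induced subgraph of $H$" should be read as "induced subgraph of strictly smaller order", which is exactly the reading used implicitly throughout this subsection (and which makes the equivalence with the strict-inequality version in Theorem~\ref{thm:finite_family_not_subH_union} go through); under that reading the cardinality bookkeeping $|G[V_i]| < |H| \iff |\overline{G[V_i]}| < |\overline{H}|$ is trivial. One should also note that the decomposition on the $G$ side is a \emph{join} decomposition while on the $\overline{G}$ side it is the decomposition into \emph{connected components}, and these are genuinely matched up by Lemma~\ref{lem:join_decomp}; mentioning this explicitly avoids any confusion about why the two notions of "component" line up. With that remark in place, the proof is three or four lines.
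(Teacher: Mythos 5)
Your proof is correct and follows essentially the same route as the paper's: both identify the join components of $G$ with the complements of the connected components of $\overline{G}$ via Lemma~\ref{lem:join_decomp}, and then use that complementation preserves the ``proper induced subgraph'' relation. You merely spell out the bookkeeping (order counting and the edge/non-edge swap) in more detail than the paper does.
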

\begin{proof}
    Let $\{C_i\}_{i \in I}$ be the connected components of
    $\overline{G}$. Then by Lemma~\ref{lem:join_decomp}, the components
    $\overline{C_i}$ are the join components of $G$. Now
    $\overline{C_i}$ is a proper induced subgraph of $H$ iff $C_i$ is
    a proper induced subgraph of $\overline{H}$. This proves the
    claim.
\end{proof}

Thus, the set of graphs that are not sub-$H$-joins can also be
characterized by a finite obstruction set according to
Theorem~\ref{thm:finite_family_not_subH_union}.

\begin{theorem}
    The set of graphs that are not sub-$H$-joins is
    \begin{align*}
        &\set{G}{\overline{G} \text{ connected}, \vert G \vert \leq \vert H \vert, \overline{G} \text{ not a proper induced subgraph of } \overline{H}}^{\uparrow} \\
        =& \left( \set{G}{\overline{G} \text{ connected}, \vert G \vert < \vert H \vert, G \not\isub H} \cup \set{G}{\overline{G} \text{ connected}, \vert G \vert = \vert H \vert}\right)^{\uparrow}
    \end{align*}
\end{theorem}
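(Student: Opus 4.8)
The plan is to mirror the proof of Theorem~\ref{thm:finite_family_not_subH_union} exactly, transporting each step through the complementation duality. The two ingredients are already in place: the preceding lemma states that $G$ is a sub-$H$-join iff $\overline{G}$ is a sub-$\overline{H}$-union, and Theorem~\ref{thm:finite_family_not_subH_union} characterizes sub-$\overline{H}$-unions. First I would observe that $G$ is \emph{not} a sub-$H$-join iff $\overline{G}$ is \emph{not} a sub-$\overline{H}$-union, which by Theorem~\ref{thm:finite_family_not_subH_union} holds iff $\overline{G}$ has an induced subgraph $U$ with $U$ connected, $\vert U\vert \le \vert \overline{H}\vert = \vert H\vert$, and $U$ not a proper induced subgraph of $\overline{H}$.

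The second step is to translate this condition on $\overline{G}$ back to a condition on $G$ by taking complements. An induced subgraph $U \isub \overline{G}$ corresponds to an induced subgraph $\overline{U} \isub G$ on the same vertex set; the conditions become: $\overline{U}$ has $\overline{\overline{U}} = U$ connected, $\vert \overline{U}\vert \le \vert H\vert$, and (since $U$ is a proper induced subgraph of $\overline{H}$ iff $\overline{U}$ is a proper induced subgraph of $H$, i.e.\ iff $\overline{U}\isub H$ and $\vert\overline{U}\vert<\vert H\vert$) $\overline{U}$ is not a proper induced subgraph of $H$. Renaming $\overline{U}$ as a generic graph $G'$, this says exactly that $G$ has an induced subgraph $G'$ with $\overline{G'}$ connected, $\vert G'\vert \le \vert H\vert$, and $G'$ not a proper induced subgraph of $H$. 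Hence the set of graphs that are not sub-$H$-joins is the upward closure (in the induced-supergraph order) of this finite family, giving the first displayed line.

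The third step is the splitting of the condition ``$G'$ not a proper induced subgraph of $H$'' into the two disjoint cases by size, exactly as in Theorem~\ref{thm:finite_family_not_subH_union}: if $\vert G'\vert < \vert H\vert$ then being not a proper induced subgraph of $H$ is equivalent to $G'\not\isub H$; if $\vert G'\vert = \vert H\vert$ then $G'$ is automatically not a \emph{proper} induced subgraph of $H$ regardless of whether $G'\isub H$, so the condition is vacuous and only $\overline{G'}$ connected and $\vert G'\vert = \vert H\vert$ remain. Taking the union of the two families and then the upward closure yields the second displayed line. The fact that upward closure distributes over the union (both as finite families and after closing upward) is routine. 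I would also remark that one cannot drop the size bound $\vert G'\vert\le\vert H\vert$ only after restricting to the minimal obstructions: a connected-complement induced subgraph of size exceeding $\vert H\vert$ always contains a connected-complement induced subgraph on exactly $\vert H\vert$ vertices, which is why the bounded family suffices — this is the content carried over from the ($\Rightarrow$) direction of the sub-$H$-union lemma via complementation.

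The only place requiring care — and the step I expect to be the mild obstacle — is checking that ``$U$ connected'' for an induced subgraph of $\overline{G}$ really does correspond to ``$\overline{U}$ has connected complement'' as an induced subgraph of $G$, i.e.\ that induced-subgraph-and-complement commute on a fixed vertex set; this is immediate but must be stated, since the asymmetry between $G$ and $\overline{G}$ in the final formula (``$\overline{G}$ connected'' appears, not ``$G$ connected'') is precisely what this commutation produces. Everything else is a direct transcription of the earlier proof through the duality of the preceding lemma.
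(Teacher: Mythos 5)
Your proposal is correct and follows exactly the route the paper intends: the paper gives no explicit proof, stating only that the characterization follows from the duality lemma ($G$ is a sub-$H$-join iff $\overline{G}$ is a sub-$\overline{H}$-union) combined with Theorem~\ref{thm:finite_family_not_subH_union}, which is precisely the complementation argument you spell out. Your elaboration of the translation step (induced subgraphs and complementation commuting on a fixed vertex set) and the size-based case split matches the paper's implicit reasoning.
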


\section{Further Work}
While we were able to shed further light on the bigger picture of
online node and edge deletion problems with advice, the most general
problems of their kind are still not solved. For node deletion
problems, the case of an obstruction set with both connected and
disconnected graphs proves to be much more involved, with the advice
complexity being heavily dependent on the obstruction set, as we have
seen in the previous section.

The logarithmic bounds of this paper cannot be directly transferred to
the $\FEDD$, as independent sets cannot be part of the
obstruction set. There are, of course, families $\F$ for which no advice
is necessary, e.g., $\F = \{\inlinetikz{1pt}{0.3}{\exgraphtwo}\}$, but it seems hard to find
non-trivial families for which less than linear advice is both
necessary and sufficient. An additional difficulty is that forbidden graphs
may be proper (non-induced) subgraphs of one another, which makes it
difficult to count deletions towards individual copies of forbidden
graphs. Chen et al.~\cite{ChenHLR21} proposed a recursive way to do
so, but it is unclear if their analysis can be generalized to
arbitrary families of forbidden graphs $\F$.

\bibliography{fnodeedgedeletion}
\end{document}